\title{Estimate of Koopman modes and eigenvalues with Kalman Filter\thanks{Submitted to the editors DATE.}}
\author{Ningxin Liu\footnotemark[4] \thanks{These authors contributed to the work equally and should be regarded as co-first authors.}
\and Shuigen Liu\footnotemark[2] \thanks{National University of Singapore (\email{shuigen@u.nus.edu}, \email{mattxin@nus.edu.sg})} 
\and Xin T. Tong\footnotemark[3] 
\and Lijian Jiang\thanks{Tongji University (\email{nxliu@tongji.edu.cn}, \email{ljjiang@tongji.edu.cn})}}
\newcommand{\bphi}{{\bm{\phi}}}
\newcommand{\bPhi}{{\bm{\Phi}}}
\newcommand{\bLambda}{{\bm{\Lambda}}}
\newcommand{\beps}{{\bm{\epsilon}}}
\newcommand{\bb}{{\bm{b}}}
\newcommand{\bbf}{{\bm{f}}}
\newcommand{\bvareps}{{\bm{\varepsilon}}}
\newcommand{\btheta}{{\bm{\theta}}}
\newcommand{\bTheta}{{\bm{\Theta}}}
\newcommand{\bxi}{{\bm{\xi}}}
\newcommand{\by}{{\bm{y}}}
\newcommand{\bh}{{\bm{h}}}
\newcommand{\bY}{{\bm{Y}}}
\newcommand{\bg}{{\bm{g}}}
\newcommand{\bK}{{\bm{K}}}
\begin{document}
\maketitle

\begin{abstract}
Dynamic mode decomposition (DMD) is a data-driven method of extracting spatial-temporal coherent modes from complex systems and providing an equation-free architecture to model and predict systems. However, in practical applications, the accuracy of DMD can be limited in extracting dynamical features due to sensor noise in measurements. We develop an adaptive method to constantly update dynamic modes and eigenvalues from noisy measurements arising from discrete systems. Our method is based on the Ensemble Kalman filter owing to its capability of handling time-varying systems and nonlinear observables. Our method can be extended to non-autonomous dynamical systems, accurately recovering short-time eigenvalue-eigenvector pairs and observables. Theoretical analysis shows that the estimation is accurate in long term data misfit. We demonstrate the method on both autonomous and non-autonomous dynamical systems to show its effectiveness.
\end{abstract}

\begin{keywords}
Dynamic mode decomposition, Koopman modes, ensemble Kalman filter, non-autonomous system
\end{keywords}

\begin{MSCcodes}
37M99 47D99 65C20
\end{MSCcodes}

\section{Introduction}
In recent years, data-driven methods have become important computational tools in applied science and engineering. With the rapid advancement in data collection, a wide range of computational methods have been developed to leverage data to improve the modeling and prediction of complex dynamical systems. These approaches connect observables and underlying patterns without explicit knowledge of the physical behavior of the system. One effective strategy involves decomposing the dynamics into multiple components to investigate complex phenomena through decomposed modes. For example, extracting relevant features from large-scale measurements and characterizing the spatiotemporal coherent structures make fluids more tractable for engineering design, analysis, and external control \cite{proctor2016dynamic}, and help explain complex physical phenomena through individually decomposed components. Proper orthogonal decomposition (POD) \cite{berkooz1993proper} is one of the earliest data-driven modal decomposition techniques in fluid dynamics,  decomposing the dynamics into orthogonal modes that optimally span the data.

Dynamic mode decomposition (DMD) \cite{rowley2009spectral,schmid2010dynamic,kutz2016dynamic} is a data-driven framework designed to extract spatial-temporal coherent modes and reconstruct the underlying dynamics of complex systems. The method does not require knowledge of the underlying governing equations; it only needs snapshots of the dynamical system. The Koopman operator \cite{mezic2013analysis,budivsic2012applied} is an infinite-dimensional linear operator that characterizes the behavior of nonlinear systems with specific observables. There is a direct connection between DMD methods and Koopman operator theory \cite{rowley2009spectral,mezic2013analysis}. DMD approximates the eigenvalues and eigenfunctions of the Koopman operator for identification and diagnostics of features in complex dynamics using simulated or experimental data. For example, DMD has been extensively used in fluid dynamics \cite{schmid2011application,jardin2012lagrangian,li2021deep} to investigate a wide range of flow phenomena, which often involve large-scale datasets. The methods are also widely used in fields such as image processing \cite{kutz2016multiresolution}, video processing \cite{bouwmans2016handbook}, financial trading \cite{mann2016dynamic} and medical care \cite{bourantas2014real}, and forecasting the spread of infectious diseases \cite{proctor2015discovering}.

One major challenge to the DMD method is the presence of noise within snapshots. Noisy snapshots result in inaccurate DMD eigenvalues and modes, leading to errors in the reconstruction and prediction of the observables. Various efforts have been made to address this issue. Dawson and Hemati \cite{dawson2016characterizing} introduced a method called Noise-corrected DMD, which directly corrects sensor noise. Forward-backward DMD \cite{bouwmans2016handbook} and Total-least-square DMD \cite{hemati2017biasing} propose to approximate the DMD operator by considering new types of approximate DMD matrix. Kalman filter-DMD \cite{jiang2022correcting} improves the accuracy of approximating the DMD operator by filtering noisy observables. Another method, proposed by N. Takeishi et al. \cite{takeishi2017bayesian}, estimates the spectrum of the Koopman operator from a Bayesian and sampling perspective. Additionally, Nonomura et al. \cite{nonomura2018dynamic} proposed a parameter estimation method for the Koopman matrix using the Kalman filter, treating the Koopman matrix as the state variable and estimating it directly.

The aforementioned numerical methods for Koopman operator estimation often assume the system to be autonomous. However, many problems, such as two-parameter processes and the skew product flows, are non-autonomous. Kutz et al. \cite{kutz2016multiresolution} proposed multiresolution dynamic mode decomposition (mrDMD), which separates complex systems into a hierarchy of multiresolution time-scale components. The mrDMD successfully decomposes the multiscale data generated from an underlying time-dependent linear system. In \cite{mezic2016koopman}, the Koopman operator framework was further extended to the non-autonomous systems with periodic or quasi-periodic time dependence, rigorously defining the time-dependent eigenfunctions, eigenvalues, and modes of the non-autonomous Koopman operator. In recent years, more numerical methods for time-dependent Koopman spectral approximation have emerged, such as extensions of DMD and Extended DMD. In \cite{giannakis2019data}, Giannakis developed a technique that involves rescaling the generator in the time-changed setting and applied the approach to mixing dynamical systems where the generator has no constant eigenfunctions. This approach, incorporating delay coordinates was further applied in \cite{das2019delay} to approximate Koopman eigenfunctions systems with pure point or mixed spectra. An approach for learning the time-dependent Koopman operator in multiscale media was presented in \cite{li2023data}, where the authors used a moving time window to localize snapshots. Other studies, like \cite{proctor2016dynamic,proctor2016including,korda2018linear}, also generalize the Koopman operator to non-autonomous systems, but focusing on input and control. However, these existing works are carried out for deterministic systems without noise or stochasticity, which can be too idealistic for some applications.

In this article, we propose an Ensemble Kalman filter (EnKF) based DMD method, which can estimate DMD modes and eigenvalues for both autonomous and non-autonomous systems. In particular, we identify DMD modes and eigenvalues in a Bayesian framework and estimate them with an EnKF. The method is suitable for temporal sequential data since it achieves online estimation of eigenvector-eigenvalue pairs when new data is added.  With the advantage of EnKF, we can obtain accurate estimations from noisy snapshots of known noise magnitude. We use time-delay snapshots as observable functions in EnKF. The origin of this method can be found in the moving stencils approach for non-autonomous systems \cite{macesic2018koopman}. The method is also applied to non-autonomous dynamical systems to estimate time-dependent Koopman eigenvalues and eigenfunctions. In our application, the moving stencil of snapshots is used to approximate the Koopman operator family. 

The paper is organized as follows . In \Cref{sec2}, we give a preliminary introduction to the Koopman operator and DMD method. In \Cref{sec3}, we propose the EnKF-DMD method to update Koopman modes and eigenvalues given noisy snapshots, and show how the algorithm can be applied to both autonomous systems and non-autonomous systems with minor adjustments. In \Cref{sec4}, we prove that EnKF-DMD provides consistent predictions, with the long term error between the reconstructed observable and the truth bounded at the scale of observation noise and interval. In \Cref{sec5}, we implement EnKF-DMD on various dynamical systems, which confirms the effectiveness and accuracy of our method.

\section{Preliminaries}
\label{sec2}
\subsection{Koopman operator for autonomous dynamical systems and DMD}
We consider a dynamical system
\begin{equation}    \label{c-system}
\diff{x}{t}=f(x), \quad x(t)\in\mathcal{M}
\end{equation}
where $\mathcal{M}$ is a smooth  manifold and $f:\mathcal{M}\rightarrow T \mathcal{M}$ is a nonlinear map. Here $T \mathcal{M}$ is the tangent bundle. We sample the data with an interval $\Delta t$ time and denote the subscript as the time index so that $x_k=x(k\Delta t)$. Then the discrete-time system representation corresponding to the continuous-time dynamical system \cref{c-system} is denoted as
\begin{equation}
\label{d-system}
x_{k+1}=F(x_k),\quad x_k\in\mathcal{M}.
\end{equation}
Let $g:\mathcal{M}\rightarrow\mathbb{C}$ be a scalar observable function in a function space $\mathcal{G}$. The discrete-time 
Koopman operator $\mathcal{K}:\mathcal{G}\rightarrow\mathcal{G}$ is an infinite-dimensional linear operator that acts on all observable functions $g$ so that
\[
\mathcal{K}g(x) = g(F(x)),\quad g\in \mathcal{G}.
\] 
The Koopman operator induces a discrete-time dynamical system on the observable $g$:
\begin{equation}
\label{evolu}
\mathcal{K}g(x_k)=g(F(x_k))=g(x_{k+1}).
\end{equation}
Now we consider the spectral decomposition of the Koopman operator to represent the evolution of the dynamical systems of interest. Define $\varphi_i:\mathcal{M}\rightarrow\mathbb{C}$ as eigenfunctions of $\mathcal{K}$ with eigenvalues $\lambda_i\in\mathbb{C}$, i.e.,
\[\mathcal{K}\varphi_i = \lambda_i\varphi_i.\]
The functions $\varphi_i$ are a set of coordinates to represent observable functions in $\mathcal{G}$. These coordinates can be used to represent the advance of observable functions in the linear dynamical system \cref{evolu}. Then the evolution of the dynamical systems based on observables is expressed as an expansion of eigenfunctions of the Koopman operator.

Let $\bg:\mathcal{M}\rightarrow\mathbb{R}^d$ be a vector of observables written in terms of Koopman eigenfunctions $\varphi_k$ as
\begin{equation}
\label{observabl_1}
\bg(x_k)= \begin{bmatrix}
g_1(x_k),
g_2(x_k), \ldots, 
g_d(x_k)
\end{bmatrix}\matT =\sum_{i=1}^{\infty}\varphi_i(x_k)\bm{\phi}_i,
\end{equation}
where $\bm{\phi}_i\in \mathbb{R}^d$ is the $i$-th Koopman mode associated with the $i$-th Koopman eigenfunction $\varphi_i$. The Koopman operator acts on both sides of \cref{observabl_1} to get
\begin{equation}
\label{observabl_2}
\bg(x_{k+1})=\mathcal{K}\bg(x_{k})=\sum_{i=1}^{\infty}\lambda_i\varphi_i(x_k)\bm{\phi}_i.
\end{equation}

In this way, the original nonlinear system \cref{d-system} is represented by a linear evolution of observable functions. If a $r$-dimensional truncation is applied, the modal decomposition of $\bg(x_k)$ is given by 
\begin{equation}
\label{decomposition}
\bg(x_k)=\sum_{i=1}^{r}\lambda^k_ib_i\bm{\phi}_i,\quad b_i=\varphi_i(x_0).
\end{equation}
Dynamical mode decomposition (DMD) \cite{kutz2016dynamic,schmid2010dynamic} is a method used to approximate the Koopman eigenvalues $\lambda_i$ and modes $\bm{\phi}_i$. Suppose that we have two data matrices $\bY_0\in \mathbb{R}^{d\times m}$  and $\bY_1\in \mathbb{R}^{d\times m}$ whose columns denote the observables $\bg$:
\begin{align}
\label{data-matrices}
\bY_0 &= \begin{bmatrix}
\bg(x_0), & \bg(x_1), & \ldots, & \bg(x_{m-1})
\end{bmatrix}, \\
\bY_1 &= \begin{bmatrix}
\bg(x_1), & \bg(x_2), & \ldots, & \bg(x_m)
\end{bmatrix}.
\end{align}
DMD produces a matrix $\widehat{\bK}$ fitting the
observable trajectory in the sense of  least-squares  such that
\begin{equation}
\label{minimum}
\widehat{\bK} = \argmin_{\bK}\norm{ \bY_1-\bK\bY_0}_{\rm F}.
\end{equation}
The method approximates a finite-dimensional operator 
$\widehat{\bK}=\bY_1\bY_0^{\dagger}$
that maps the columns of $\bY_0$ to $\bY_1$. 
The corresponding algorithm, Exact DMD \cite{tu2013dynamic},  is formalized in \cref{alg:1}. 

\begin{algorithm}[htb]
	\caption{Exact DMD}
	\label{alg:1}
	\begin{algorithmic}[1]
	\STATE Compute the compact SVD of 
        $\bY_0=\bm{U}_r\mathbf{\Sigma}_r\bm{V}_r\matH$
        \STATE Define a matrix $\widetilde{\bK}=\bm{U}_r\matH\bY_1\bm{V}_r\bm{\Sigma}_r^{-1}$
        \STATE Calculate nonzero eigenvalues and eigenvectors of $\widetilde{\bK}$,
        i.e., compute $\widetilde{\bm{\phi}}$ and $\mathbf{\lambda}$ such that $\widetilde{\bK}\widetilde{\bm{\phi}}=\mathbf{\lambda}\widetilde{\bm{\phi}}$
        \STATE The DMD mode $\bm{\phi}$ is given by
$\bm{\phi}=\bY_1\bm{V}_r\mathbf{\Sigma}_r^{-1}\widetilde{\bm{\phi}}$ corresponding to eigenvalue $\mathbf{\lambda}$
	\end{algorithmic}
\end{algorithm}

\subsection{Koopman operator for non-autonomous dynamical systems}
Let us consider the following nonlinear non-autonomous dynamical system
\begin{equation}
\label{non-sys}
\left\{
\begin{aligned}
&\diff{x}{t}=f(x,t), \;x\in\mathcal{M},  \\
&x(0)=x_0.
\end{aligned}
\right.
\end{equation}
We define a mapping $\mathcal{S}:\mR_+ \times \mR_+ \times \mathcal{M} \rightarrow \mathcal{M}$ such that this two-parameter family   $\mathcal{S}^{t,t_0}=\mathcal{S}(t,t_0,\cdot)$ satisfying the cocycle property, i.e.,
\begin{equation}
\label{cocycle_S}
\mathcal{S}^{u,t}\circ\mathcal{S}^{t,s}=\mathcal{S}^{u,s},\quad
\mathcal{S}^{t_0,t_0}=\mathcal{I}.  
\end{equation}
Suppose that $\mathcal{S}^{t,t_0}$ is generated by the non-autonomous dynamical system \cref{non-sys}, then the solution of this system can be written in the form of
\[x(t)=\mathcal{S}^{t,t_0}x_0.\]
Let $\bg(x):\mathcal{M}\rightarrow\mathbb{R}^d$ still be an observable vector in space $\mathcal{F}$. Now we define the two-parameter Koopman operator family $\mathcal{K}^{t,t_0}$ \cite{schmid2010dynamic}: 
\[
\mathcal{K}^{t,t_0}\bg = \bg\circ\mathcal{S}^{t,t_0}
\]
Let $\{\lambda_i^{t,t_0},\varphi_i^{t,t_0}\}_{i=1}^r$ be the  eigenvalues and eigenfunctions of  $\mathcal{K}^{t,t_0}$, i.e.,
\[
\mathcal{K}^{t,t_0}\varphi_{i}^{t,t_0}(x) = {\lambda_i^{t,t_0}}\varphi_{i}^{t,t_0}(x).
\]
Then related decomposition of observables via the non-autonomous Koopman operator is given by:
\begin{equation}
\label{decomposition-non}
\bg(x_t)=\mathcal{K}^{t,t_0}\bg(x_{0})=\sum_{i=1}^{d} {\lambda^{t,t_0}_i} \varphi_i^{t,t_0}(x_0)\bm{\phi}_i^{t,t_0},
\end{equation}
where $\bm{\phi}_i^{t,t_0}$ is the $i$-th Koopman mode associated with the $i$-th Koopman eigenvalue $\lambda^{t,t_0}_i$. In non-autonomous dynamical systems, the modes and eigenvalues of the Koopman operator change with time. With the cocycle property \cref{cocycle_S}, we can derive that non-autonomous Koopman operator also satisfies the cocycle property such that
\begin{equation}
\label{cocycle_K}
\mathcal{K}^{u,t}\circ\mathcal{K}^{t,s}=\mathcal{K}^{u,s},\quad
\mathcal{K}^{t_0,t_0}=\mathcal{I}.
\end{equation}

\section{Koopman Spectral estimation with EnKF} \label{sec3}
It is known that the exact DMD method (\cref{alg:1}) is sensitive to noise in the dataset. To improve noise robustness, we propose using EnKF \cite{evensen2003ensemble} to estimate the Koopman modes and eigenvalues, resulting in a Bayesian DMD method. EnKF is a Bayesian mechanism tracking the dynamic of a system given noisy observations. To apply EnKF, we treat the eigenvalues and modes as states in some dynamical system and use time-delay snapshots as observables to correct the estimation. We will show how this method can be applied to both autonomous and non-autonomous dynamical systems.

\subsection{Review on Ensemble Kalman Filter} 
In this section, we give a brief review on filtering and EnKF before introducing our method. In filtering, one considers a dynamical system with state $\btheta_k$ and its noisy observation $\by_k$: 
\begin{subequations}
\begin{align} 
    \btheta_{k} =~& \bbf_k (\btheta_{k-1}) + \bxi_k,  \label{eq:EnKF_evol}  \\
    \by_k =~& \bh_k(\btheta_k) + \bvareps_k.  \label{eq:EnKF_obs}
\end{align}   
\end{subequations}
Here $\bbf_k$ is the evolution of the state, $\bh_k$ is the observation function, and $\bxi_k$ and $\bvareps_k$ are the state noise and observation noise respectively. 

The objective of filtering  is to estimate the posterior distribution of $\btheta_k$ given the observables $\by_{1:k} := (\by_1,\by_2,\cdots,\by_k)$. The posterior distribution is given by the recursive Bayesian formula:
\[
    \mP(\btheta_k|\by_{1:k}) = \frac{ \mP(\by_k|\btheta_k, \by_{1:k-1}) \mP(\btheta_k|\by_{1:y-1}) }{ \mP( \by_k | \by_{1:y-1}) } = \frac{ \mP(\by_k|\btheta_k) \mP(\btheta_k|\by_{1:k-1}) }{ \mP( \by_k | \by_{1:k-1}) }.
\]
Since from \cref{eq:EnKF_obs}, $\by_k$ is conditionally independent of $\by_{1:k-1}$ given $\btheta_k$, we have
\begin{equation}    \label{eq:exact_anal}
    \mP(\btheta_k|\by_{1:k}) \propto \mP(\by_k|\btheta_k) \mP(\btheta_k|\by_{1:k-1}).
\end{equation}
Here $\mP(\by_k|\btheta_k)$ is the likelihood for $\by_k$ and is computable. To compute $\mP(\btheta_k|\by_{1:k-1})$, notice from \cref{eq:EnKF_evol}, $\btheta_k$ is conditionally independent of $\by_{1:k-1}$ given $\btheta_{k-1}$, and thus
\begin{equation}    \label{eq:exact_pred}
    \mP(\btheta_k|\by_{1:k-1}) = \int \mP(\btheta_k|\btheta_{k-1}) \mP(\btheta_{k-1}|\by_{1:k-1}) \mdd \btheta_{k-1}.
\end{equation}
\cref{eq:exact_anal} and \cref{eq:exact_pred} provide exact solutions to update the target distribution $\mP(\btheta_k|\by_{1:k})$ in a recursive manner. However, the above exact quadrature solution is computationally infeasible in practice, and one may seek practical approximations, among which a popular choice is the EnKF \cite{evensen2009data}. 

In EnKF, the posterior distribution $\mP(\btheta_k|\by_{1:k})$ is approximated by a Gaussian distribution estimated using an ensemble $\{\btheta_k^{(i)}\}_{i=1}^N$. For simplicity of presentation, consider the case where the state $\btheta_k$ is invariant, i.e. $\bbf_k(\btheta) \equiv \btheta$ in  \cref{eq:auto_evol}. Under this setting, 
EnKF takes a simpler form that approximates \cref{eq:exact_pred} and \cref{eq:exact_anal} by 
\begin{equation}    \label{eq:EnKF}
    \btheta_{k+1}^{(i)} = \btheta_k^{(i)} + K_k \Brac{ \widehat{\by}_k - \bh_k(\btheta_k^{(i)}) - \bvareps_k^{(i)} } + \bxi_k^{(i)},
\end{equation}
\begin{itemize}
    \item $K_k$ is known as the \textit{Kalman gain} defined by
\begin{equation}    \label{eq:Kal_gain}
    K_k := P_k^{\btheta,\by} \Brac{ P_k^{\by,\by}+ \sigma^2 I }^{-1},
\end{equation}
where we define the empirical mean and covariance matrices  
\begin{equation}   \label{eq:cov} 
\begin{split}
    &\mean{\btheta}_k := \frac{1}{N} \sum_{i=1}^N \btheta_k^{(i)}, \quad \mean{\bh_k(\btheta_k)} := \frac{1}{N} \sum_{i=1}^N \bh_k(\btheta_k^{(i)}), \\
    &P_k := \frac{1}{N-1} \sum_{i=1}^N \Brac{ \btheta_k^{(i)} - \mean{\btheta}_k } \otimes \Brac{ \btheta_k^{(i)} - \mean{\btheta}_k }, \\
    &P_k^{\btheta,\by} := \frac{1}{N-1} \sum_{i=1}^N \Brac{ \btheta_k^{(i)} - \mean{\btheta}_k } \otimes \Brac{ \bh_k(\btheta_k^{(i)}) - \mean{\bh_k(\btheta_k)} }, \\
    &P_k^{\by,\by} := \frac{1}{N-1} \sum_{i=1}^N \Brac{ \bh_k(\btheta_k^{(i)}) - \mean{\bh_k(\btheta_k)} } \otimes \Brac{ \bh_k(\btheta_k^{(i)}) - \mean{\bh_k(\btheta_k)} }.
\end{split}
\end{equation}
$P_k^{\by,\btheta}$ is defined similarly and note $ P_k^{\by,\btheta} = (P_k^{\btheta,\by})\matT $.

\item $\widehat{\by}_k$ is one realization of the noised observation $\by_k$, i.e. the data one obtained.  

\item $\bxi_k^{(i)},\bvareps_k^{(i)}$ are artificial noises usually drawn from Gaussian distributions
\[
    \bxi_k^{(i)} \topsm{\iid}{\sim} \mcN(0,Q), \quad \bvareps_k^{(i)} \topsm{\iid}{\sim} \mcN(0,\sigma^2 I), \quad i =1,\dots, N,
\]
and $Q,\sigma$ are chosen to match the covariance matrices of the actual noise. Here we assume the observation noise is isotropic for simplicity, and note one can always normalize $\by$ to make it isotropic. 
\end{itemize}
In expectation, the mean and covariance of \cref{eq:EnKF} would follow 
\begin{subequations}
\begin{align}
    \mean{\btheta}_{k+1} &= \mean{\btheta}_k + K_k \Brac{ \widehat{\by}_k - \mean{\bh_k(\btheta_k)} }, \label{eq:ETKF_mean} \\
    P_{k+1} =~& P_k - P_k^{\btheta,\by} \Brac{ P_k^{\by,\by}+ \sigma^2 I }^{-1} P_k^{\by,\btheta} + Q,   \label{eq:ETKF_cov}
\end{align}
\end{subequations} 
For vanilla EnKF, the mean and covariance update \cref{eq:ETKF_mean}, \cref{eq:ETKF_cov} are only satisfied in expectation. To ensure the exact update, we will use a modified scheme known as the ensemble transform Kalman filter (ETKF). In essence, ETKF aims to find the transform operator $T_k$ to ensure
\[
    \frac{ \Brac{T_k \bTheta_k} \otimes \Brac{T_k \bTheta_k}}{N-1} = P_{k+1} = P_k - P_k^{\btheta,\by} \Brac{ P_k^{\by,\by}+ \sigma^2 I }^{-1} P_k^{\by,\btheta} + Q, 
\]
where we denote $\bTheta_k$ as the deviation matrix of $\btheta_k^{(i)}$:  
\[
    \bTheta_k = \Rectbrac{ \btheta_k^{(1)} - \mean{\btheta}_k , \cdots, \btheta_k^{(N)} - \mean{\btheta}_k }.
\]
With the transform operator $T_k$, the covariance update is taken as $\bTheta_{k+1} = T_k \bTheta_k$. In this way, the update \cref{eq:ETKF_mean} and \cref{eq:ETKF_cov} are strictly satisfied in ETKF. For implementation details, interested readers are referred to \cite{bishop2001adaptive}. 

\subsection{Autonomous dynamical systems}
For autonomous discrete systems, the spectrum of the Koopman operator is time-independent. This allows us to treat it as a parameter vector that remains constant over time. To apply EnKF to estimate Koopman modes and eigenvalues, we need to specify the dynamic model \cref{eq:EnKF_evol}, \cref{eq:EnKF_obs}. We introduce our method as follows. 

Based on the spectral decomposition \cref{decomposition} of the Koopman operator, it suffices to estimate the triple of the Koopman modes, eigenvalues and the initial conditions $\{\phi_i,\lambda_i,b_i\}_{i=1}^r$. Denote these parameters as
\begin{equation}    \label{eq:para}
    \btheta = \begin{bmatrix}
    \bphi\matT_1, \cdots, \bphi\matT_r, \lambda_1, \cdots, \lambda_r, b_1, \cdots, b_r
    \end{bmatrix}\matT.
\end{equation}
We will treat $\btheta$ as the state in the dynamical system \cref{eq:EnKF_evol}. As the Koopman operator \cref{evolu} is time-independent for automonous systems, $\btheta$ is also time-independent. Therefore, the dynamics of $\btheta$ can be simply taken as
\begin{equation}    \label{eq:auto_evol}
    \btheta_{k+1} = \btheta_k. 
\end{equation}
Next we determine the observation model \cref{eq:EnKF_obs}, i.e. the function $\bh_k(\btheta_k)$. Notice by the decomposition \cref{decomposition}, the observation function $\bg(x_k)$ can be written as a function of $\bm{\theta}$: 
\begin{equation}    \label{eq:hk}
    \bg(x_k) = \sum_{i=1}^r \lambda_i^k b_i \phi_i =: h_k(\btheta). 
\end{equation}
It can be rewritten in a matrix form $h_k(\btheta) = \bPhi \bLambda^k \bb$, where
\[
    \bPhi = \begin{bmatrix}
        \Phi_1, \dots, \Phi_r
    \end{bmatrix}, \quad \bLambda = \text{diag}(\lambda_1,\cdots,\lambda_r), \quad \bb = \begin{bmatrix}
        b_1, \cdots, b_r
    \end{bmatrix}\matT.
\]

One can aleady use $h_k$ as the observation function at time $k$ to run EnKF for $\btheta$. But this is not a good choice. Using observation at one single time is often insufficient to capture the dynamical characteristics. A better choice is to use time-delay snapshots of the observation function, which can be used for accurate reconstruction due to the  delay embedding theorems. More specifically, Taken's embedding theorem \cite{takens2006detecting} shows that one can indeed reconstruct qualitative features of the dynamic in $\mR^d$ with $N>2d$ number of time-delay measurements. This motivates us to replace $h_k$ by its time-delay version. To be specific, we take 
\begin{equation}    \label{eq:obs}
\begin{split}
    \bh_k (\btheta) := \begin{bmatrix}
        h_{k-n}(\btheta) \\
        h_{k-n+1}(\btheta) \\
        \vdots \\
        h_k(\btheta)
    \end{bmatrix} = \begin{bmatrix}
        \bPhi \bLambda^{k-n} \bb \\
        \bPhi \bLambda^{k-n+1} \bb \\
        \cdots \\
        \bPhi \bLambda^k \bb
    \end{bmatrix}. 
\end{split}
\end{equation}
That is, $\bh_k$ is built by $(n+1)$-time delay snapshots of the observables. Accordingly, we use $(n+1)$-time delay observed data
\begin{equation}    \label{eq:data}
    \widehat{\by}_k = \by_k+ \widehat{\beps}_k = \begin{bmatrix}
        \bg(x_{k-n}) \\
        \bg(x_{k-n+1}) \\
        \vdots \\
        \bg(x_k)
    \end{bmatrix} + 
    \begin{bmatrix}
        \widehat{\epsilon}_{k-n} \\
        \widehat{\epsilon}_{k-n+1} \\
        \vdots \\
        \widehat{\epsilon}_k
    \end{bmatrix}.  
\end{equation}

Here $\widehat{\epsilon}_k$ is the noise in the observed data at time $k$. Note we use a different notation $\widehat{\epsilon}_k$ instead of $\bvareps_k$ to distinguish one realization of the noise and a random noise. More generally, $\widehat{\epsilon}_k$ can also include the model misspecification and measurement error etc.  

We will apply the EnKF algorithm \cref{eq:EnKF} to the above extended underlying system to obtain the Koopman modes and eigenvalues. Until the last step of filtering ($k=m$), all snapshots have been utilized to correct the estimates of Koopman modes and eigenvalues. Denote the posterior mean of $\btheta_m$ as $\overline{\btheta}^{+}_m$ and the covariance matrix as $P_m$. Then we can estimate the parameter $\btheta$ by
\[
    \btheta\thicksim\mathcal{N}(\overline{\btheta}^{+}_m,P_{m}).
\]
Particularly, the distribution of the DMD modes and eigenvalues are given by
\begin{equation}
\label{eigenpairs}
\begin{bmatrix}
\rm
\bm{\phi}_1\\
\vdots\\
\bm{\phi}_r
\end{bmatrix}
\thicksim\mathcal{N}(\begin{bmatrix}
\rm
\overline{\bm{\phi}}^{+}_{1}\\
\vdots\\
\overline{\bm{\phi}}^{+}_{r}
\end{bmatrix},P^{(1)}_{m}),\quad
\begin{bmatrix}
\rm
\lambda_1 \\
\vdots\\
\lambda_r
\end{bmatrix}
\thicksim\mathcal{N}(\begin{bmatrix}
\rm
\overline{\lambda}^{+}_{1} \\
\vdots\\
\overline{\lambda}^{+}_{r}
\end{bmatrix},P^{(2)}_{m}).
\end{equation}
In \cref{eigenpairs}, $\{\overline{\bm{\phi}}^{+}_{i},\overline{\lambda}^{+}_{i}\}$ are the components of $\overline{\btheta}^{+}_m$,  and $P^{(1)}_{m}$ and $P^{(2)}_{m}$ are the covariance matrices for the corresponding components. 
The complete algorithm of EnKF-DMD is shown in \cref{alg:2}.

\subsubsection{Initialization} 
It is critical to determine the initial value and truncated number for this method. We use the output by the exact DMD as initialization in our algorithm. That is, we apply \cref{alg:1} to the snapshots $\{y_0,y_1,\cdots,y_m\}$ to obtain an initial estimate of the DMD modes $\{\bm{\phi}^{\rm DMD}_i\}_{i=1}^r$, eigenvalues$\{\lambda^{\rm DMD}_i\}_{i=1}^r$, and coefficients $\{b^{\rm DMD}_i\}_{i=1}^r$. We take the truncated number $r$ to be the rank in SVD in the exact DMD. Explicitly,  
\[
    \btheta^{\rm DMD}=\begin{bmatrix}
    {\bm{\phi}^{\rm DMD}_1}\matT ,\cdots,{\bm{\phi}^{\rm DMD}_r}\matT ,\lambda_1^{\rm DMD},\cdots,\lambda_r^{\rm DMD},b_1^{\rm DMD},\cdots,b_r^{\rm DMD}\end{bmatrix}\matT 
\]
and $\btheta_0\thicksim\mathcal{N}(\btheta^{\rm DMD},C_0)$. Here $C_0$ is the covariance of the prior distribution of $\btheta$. 

\begin{algorithm}
	\renewcommand{\algorithmicrequire}{\textbf{Input:}}
	\renewcommand{\algorithmicensure}{\textbf{Output:}}
	\caption{EnKF-DMD}
	\label{alg:2}
	\begin{algorithmic}[1]
		\REQUIRE dataset $\{y_0,y_1,\cdots,y_m\}$, covariance matrix $Q_k$ of system noise ($Q_k=0$ for autonomous systems),  noise magnitude $\sigma$, time-delay number $n$ 
		\ENSURE  Gaussian distribution of DMD modes and eigenvalues $\{\phi_i, \lambda_i\}$ or  $\{\phi^{t_k,t_0}_i,\lambda^{t_k,t_0}_i\}$ 
		\STATE Compute exact DMD and obtain the initial value  $\{\bm{\phi}^{\rm DMD}_i,\lambda_i^{\rm DMD},b_i^{\rm DMD}\}_{i=1}^r$ and rank $r$
		\STATE Construct an evolved state-space model \cref{eq:auto_evol} and \cref{eq:obs}
		\STATE Iterate $\btheta_k$ by Ensemble Kalman filter
     
            \FOR{$k=1$ to $m$}   

            \STATE $\mean{\btheta}_{k+1} = \mean{\btheta}_k + K_k \Brac{ \widehat{\by}_k - \mean{\bh_k(\btheta_k)} }$
            
            \STATE $P_{k+1} = P_k - P_k^{\btheta,\by} \Brac{ P_k^{\by,\by}+ \sigma^2 I }^{-1} P_k^{\by,\btheta} + Q$ 
            \ENDFOR
    
		\STATE Return $\btheta_m$ for autonomous systems and return $\btheta_k ~(k=1,2,\cdots)$ for non-autonomous systems
	\end{algorithmic}
\end{algorithm}

\subsection{Non-autonomous dynamical systems}
Next, we extend the method to non-autonomous dynamical systems. Recall that corresponding Koopman operator family with its spectral decomposition \cref{decomposition-non}. For simplicity of discussion, we will set $t_0\equiv 0$ and introduce 
\[
    \lambda_i(k)=\Brac{\lambda_i^{t_k,0}}^{\frac{1}{k}},\quad  b_i(k)=\varphi^{t_k,0}_i(x_0),\quad   \phi_i(k)=\phi^{t_k,0}_i(k). 
\]
Then \cref{decomposition-non} can also be written as 
\begin{equation}
\label{eq:spectral}
    \bg(x_{k}) = \sum_{i=1}^{\infty} \Brac{\lambda_i(t_k)}^k \phi_{i}(k) b_i(k).
\end{equation}
We still write the relevant parameters together as
\begin{equation}    \label{eq:para_NonAuto}
    \btheta_k = \begin{bmatrix}
    \bphi\matT_1(k), \cdots, \bphi\matT_r(k), \lambda_1(k), \cdots, \lambda_r(k), b_1(k), \cdots, b_r(k)
    \end{bmatrix}\matT.
\end{equation}
Compared with the autonomous case, it can be seen that $\btheta_k$ changes with $k$ in the non-autonomous systems, while it remains constant in the autonomous scenario.

On the other hand, it should be noticed that in many applications $\Delta t=t_{k+1}-t_k$ is usually small, since the observables are frequent. Note that $\mathcal{K}^{t_k+1,0}=\mathcal{K}^{t_{k+1},t_k}\circ\mathcal{K}^{t_k,0}$, where $\mathcal{K}^{t_{k+1},t_k}$ describes the changes take place in $[t_k, t_{k+1}]$, which are often small. Therefore, $\mathcal{K}^{t_{k+1},t_k}\approx \mathcal{I}$, or  equivalently $\mathcal{K}^{t_{k+1},0}\approx \mathcal{K}^{t_k,0}$. Then by operator perturbation theory, see \cite{kloeckner2019effective,kato2013perturbation}, their spectral properties, in particular $\btheta_k$ and $\btheta_{k+1}$, will also be close to each other. 
In particular, we can write
\begin{equation}
\label{tra_non}
\btheta_{k+1} = \btheta_k+\delta \btheta_k.
\end{equation}

When $f(x,t)$ in \cref{non-sys} is available,  formulas for $\delta\btheta$ can be derived using operator perturbation theory. However, in practice we often only have access to the observation data, and the perturbation formulas are in general very involved. It is of this reason, we model $\delta \theta_k\sim \mathcal{N}(0, Q_k)$. Such modelling allows us to apply the EnKF method to the non-autonomous case in a similar manner.  

The observation function is constructed similarly as in the autonomous case: 
\begin{equation}
\label{obs_non}
\bh_k(\bm{\theta}_k)=\begin{bmatrix}
\bPhi(k)\bLambda^{k-n}(k)\bb(k) \\
\bPhi(k)\bLambda^{k-n+1}(k)\bb(k) \\
\vdots \\
\bPhi(k)\bLambda^{k}(k)\bb(k)
\end{bmatrix}.    
\end{equation}

The EnKF can be applied to \cref{tra_non} and \cref{obs_non} to obtain the spectral estimation of discrete Koopman operator $\mathcal{K}^{t_{k},0}$, since its formula also allows nonzero $Q_k$. 
Again the method is also included in \cref{alg:2}.

\subsection{Choice of delay coordinates}
\label{sec3_4}
The time-delay number $n$ is critical for \cref{alg:2}. It determines how much historical data   we used to update the parameter estimates. In general, one needs a sufficiently large $n$ so that $\bh_k$ is informative on the recovery of $\btheta$. In the Kalman filter literature \cite{MR3642295}, this is often characterized as the \emph{observability} of the system. 
On the other hand, if $n$ is large, the computational cost of $\bh_k$ will be high. To resolve this dilemma, we propose a method to determine $n$ adaptively according to the analysis of the residual of eigenvalues. We begin with the Koopman eigenvalues $\Lambda={\rm diag}(\lambda_1,\cdots,\lambda_r)$ of the local stencil of snapshots
\[\by_1,\by_2,\cdots,\by_{s},\]
where $s\geq r+1$. When new data $\by_{s+1}$ is available, we extend the stencil to \[\by_1,\by_2,\cdots,\by_{s},\by_{s+1},\]
and compute the Koopman eigenvalues $\Lambda'={\rm diag}(\lambda'_1,\cdots,\lambda'_r)$ of new stencil. Define the threshold of error as $\epsilon$. If the residual $e=\norm{\Lambda'-\Lambda}_{\rm F}$ is smaller than $\epsilon$, the number of delay coordinates is determined as $s+1$. Otherwise, input new data and repeat the above steps until $e\leq\epsilon$. The whole process to determine $n$ is concluded in \cref{tikz:1}. 

\thispagestyle{empty}
\tikzstyle{startstop} = [rectangle, rounded corners, minimum width = 2cm, minimum height=1cm,text centered, draw = black]
\tikzstyle{io} = [trapezium, trapezium left angle=70, trapezium right angle=110, minimum width=2cm, minimum height=1cm, text centered, draw=black]
\tikzstyle{process} = [rectangle, minimum width=3cm, minimum height=1cm, text centered, draw=black]
\tikzstyle{decision} = [diamond, aspect = 3, text centered, draw=black]
\tikzstyle{arrow} = [->,>=stealth]

\begin{figure}[htbp]
\centering
\begin{center}
\begin{tikzpicture}[node distance=0.5cm]
  \node[startstop]                        (start)   {\makecell{Data $\{\by_1,\cdots,\by_{r}\}$ and matrices $\bY_0$, $\bY_1$ \\ Koopman eigenvalues $\Lambda=\rm diag\{\lambda_1,\cdots,\lambda_{r}\}$\\ Begin with $s=r+1$ }};
  \node[process, below=of start]                         (step 1)  {Data $\{\by_1,\cdots,\by_{s}\}$};
  \node[process, below=of step 1]                       (step 2)  {Koopman eigenvalues $\Lambda'$};
  \node[decision, diamond, aspect=2, below=of step 2]     (choice)  {$\norm{\Lambda'-\Lambda}_{\rm F}\leq \epsilon$};
  \node[process, right=30pt of choice]                   (step x)  {\makecell{Add snapshot $\by_{s+1}$\\ Set $\Lambda=\Lambda'$ and $s=s+1$}};
  \node[startstop, below=20pt of choice]  (end)     {$n=s$};
  \draw[->] (start)  -- (step 1);
  \draw[->] (step 1) -- (step 2);
  \draw[->] (step 2) -- (choice);
  \draw[->] (choice) -- node[left]  {Yes} (end);
  \draw[->] (choice) -- node[above] {No}  (step x);
  \draw[->] (step x) -- (step x|-step 1) -> (step 1);
\end{tikzpicture}
\end{center}
\caption{Flowchart for determining the numeber of delay coordinates.} \label{tikz:1}
\end{figure}
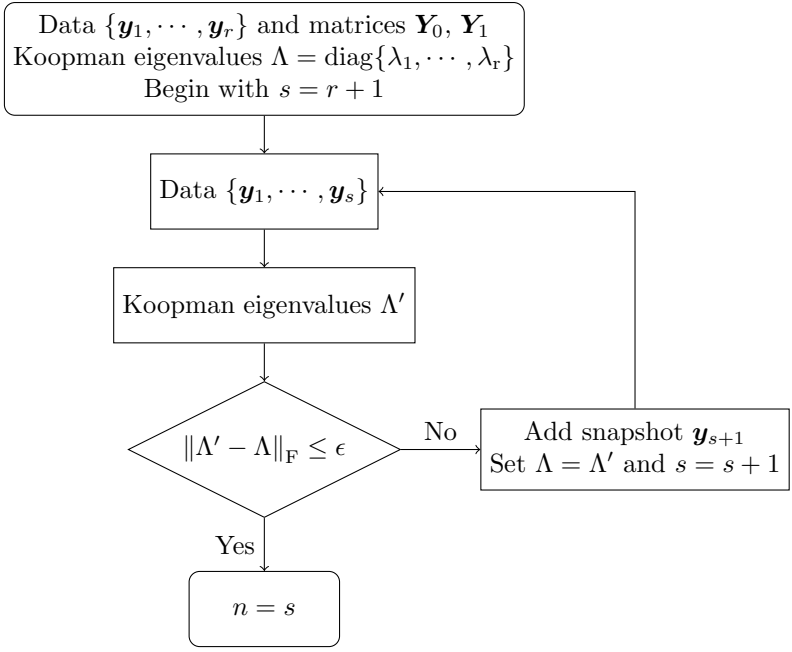

\section{Analysis result}
\label{sec4}
In this section, we will provide some theoretical analysis of the methods we have derived. 
\subsection{Autonomous system}
For autonomous system, we assume there is a ``true" finite dimensional Koopman parameter $\btheta^\dagger$, and the observation \cref{eq:data} are generated by it 
\begin{equation}    \label{eq:Auto_Obs}
    \widehat{\by}_k = \bh_k(\btheta^\dagger) + \widehat{\beps}_k. 
\end{equation}
This is a simplification as in general $\theta^\dagger$ is infinite dimensional, and our assumption can be seen as a finite dimensional truncation of it. Our analysis result relies on the following linearization assumption 
\begin{asm} \label{asm:Auto_Lin}
There exists $H_k \in \mR^{D\times d} $ and norm bound $M>0$ s.t. 
\begin{equation}    \label{eq:Auto_Lin}
    \bh_k(\btheta_k^{(i)}) = \bh_k(\btheta^\dagger) + H_k(\btheta_k^{(i)} - \btheta^\dagger), \quad \norm{H_k} \leq M. 
\end{equation}
\end{asm}

\begin{rem}
The linearization assumption holds in many nonlinear scenarios, especially when the parameter dimension is high and the ensemble is not large \cite{2308.16784}. To determine $H_k$, there are $Dd$ free unkowns and $DN$ linear constraints. We usually take $N<d$, so that there will always be a solution when these linear constraints are compatible. On the other hand, $H_k$ can be viewed as a local linear approximation of $\bh_k$. Since the ensemble members get closer as they evolve, such approximation will be good enough to ensure \cref{eq:Auto_Lin}, and $M$ can be chosen as the norm of $\bh_k$. 
\end{rem}

Our next theorem discusses the performance of ETKF in the long run. Theoretically speaking, ETKF is easier to analyze since its covariance update is deterministic. EnKF implementation with random $\xi^{(i)}_k$ in \cref{eq:EnKF} is often easier with similar empirical performance, but their analysis is much more involved due to random covariance updates and hence is omitted here. 
\begin{theorem} \label{thm:Auto_Converge}
Consider the ETKF method \cref{eq:ETKF_mean}, \cref{eq:ETKF_cov} for Koopman spectral estimation. Under \cref{asm:Auto_Lin}, and assume that the parameters are uniformly bounded: 
\begin{equation}    \label{eq:Asm_Auto_Bound}
    \max \Big\{ \normeo{\btheta^\dagger} , \sup_{k,i} \normeo{\btheta_k^{(i)}} \Big\} \leq R.
\end{equation}
Then there exist constants $C_1,C_2$ depending on $\sigma,M,R,Q$ and initialization, s.t. 
\begin{equation}    \label{eq:Auto_Convergence}
    \frac{1}{T} \sum_{k=0}^{T-1} \normeo{ \mean{\by}_k - \bh_k(\btheta^\dagger) }^2 \leq \frac{C_1}{T} + \frac{C_2}{T} \sum_{k=0}^{T-1} \normeo{\widehat{\beps}_k}.
\end{equation}
Here $\bar{y}_k$ is the averaged observation of the ensemble $\bar{y}_k = \frac{1}{N} \sum_{i=1}^N \bh_k(\btheta_k^{(i)})$.
\end{theorem}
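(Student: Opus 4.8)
The plan is to track the evolution of the error quantities $e_k^{(i)} := \btheta_k^{(i)} - \btheta^\dagger$ and their ensemble mean $\bar e_k := \mean{\btheta}_k - \btheta^\dagger$ through the ETKF update, and to extract a telescoping (Lyapunov-type) inequality for a weighted norm of $\bar e_k$. Under the linearization \cref{asm:Auto_Lin}, the observation discrepancy is $\mean{\by}_k - \bh_k(\btheta^\dagger) = H_k \bar e_k$, so the left-hand side of \cref{eq:Auto_Convergence} is controlled by $\frac1T\sum_k \normeo{H_k \bar e_k}^2$. Subtracting $\btheta^\dagger$ from the mean update \cref{eq:ETKF_mean} and using $\widehat{\by}_k = \bh_k(\btheta^\dagger) + \widehat{\beps}_k$ gives
\begin{equation}
    \bar e_{k+1} = \bar e_k - K_k H_k \bar e_k + K_k \widehat{\beps}_k = (I - K_k H_k)\bar e_k + K_k \widehat{\beps}_k,
\end{equation}
and under the linearization one also checks that $P_k^{\btheta,\by} = P_k H_k\matT$ and $P_k^{\by,\by} = H_k P_k H_k\matT$, so $K_k = P_k H_k\matT (H_k P_k H_k\matT + \sigma^2 I)^{-1}$ and the covariance update \cref{eq:ETKF_cov} becomes the standard Kalman recursion $P_{k+1} = P_k - P_k H_k\matT(H_k P_k H_k\matT + \sigma^2 I)^{-1} H_k P_k + Q$.

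Next I would use the Kalman-filter identity (Joseph form / information-matrix manipulation) that makes $P_{k+1}^{-1}$ appear: writing $P_k' := (P_k^{-1} + \sigma^{-2} H_k\matT H_k)^{-1}$ for the analysis covariance, one has $I - K_k H_k = P_k' P_k^{-1}$ and $K_k = \sigma^{-2} P_k' H_k\matT$. The natural Lyapunov functional is $V_k := \bar e_k\matT P_k^{-1} \bar e_k$ (or its analysis-covariance analogue). Plugging the mean recursion into $V_{k+1}$ and completing the square should yield something of the shape
\begin{equation}
    V_{k+1}^{\text{anal}} + \sigma^{-2}\normeo{H_k \bar e_k}^2 \;\leq\; V_k + (\text{cross term in } \widehat{\beps}_k) + (\text{quadratic term in } \widehat{\beps}_k),
\end{equation}
where the cross term is handled by Young's inequality, leaving a residual $\lesssim \normeo{\widehat{\beps}_k}^2$ plus a $\normeo{\widehat{\beps}_k}$ contribution (the latter explaining the first power of $\normeo{\widehat{\beps}_k}$ in the statement — presumably the bound $\normeo{\widehat{\beps}_k}\le 2R + $ const from \cref{eq:Asm_Auto_Bound} is used to write $\normeo{\widehat{\beps}_k}^2 \le C\normeo{\widehat{\beps}_k}$). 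The step from the analysis covariance back to the prediction covariance at the next time (accounting for the $+Q$) needs $P_{k+1}^{-1} \preceq (P_k')^{-1}$-type monotonicity, which follows since $Q \succeq 0$. Telescoping from $k=0$ to $T-1$ and dividing by $T$ then gives $\frac1T\sum_k \normeo{H_k\bar e_k}^2 \leq \frac{\sigma^2 V_0}{T} + \frac{C}{T}\sum_k \normeo{\widehat{\beps}_k}$, which is \cref{eq:Auto_Convergence} with $C_1 = \sigma^2 V_0$ and $C_2$ absorbing $\sigma^2$, $M$, $R$, $Q$.

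The main obstacle I anticipate is controlling the conditioning of $P_k$: the Lyapunov argument with $V_k = \bar e_k\matT P_k^{-1}\bar e_k$ requires $P_k$ to stay invertible with bounds independent of $k$, i.e. uniform upper and lower bounds $\alpha I \preceq P_k \preceq \beta I$. The upper bound is easy (the update is a contraction plus $Q$, and boundedness of the ensemble via \cref{eq:Asm_Auto_Bound} caps $P_k$), but the lower bound — which is what prevents $V_0$ from being infinite and keeps the telescoping meaningful — is genuinely delicate and typically requires a uniform observability/detectability hypothesis on the pair $(H_k, \text{identity dynamics})$. In this autonomous setting the dynamics are the identity, so observability reduces to $\sum_{j=k}^{k+\ell} H_j\matT H_j \succeq \gamma I$ over some window, which should hold given the time-delay embedding design of $\bh_k$; I would either invoke such a condition explicitly or note that the presence of $Q \succ 0$ already supplies a uniform lower bound $P_k \succeq Q$ after the first step, sidestepping the observability discussion entirely. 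A secondary technical point is justifying that the ETKF deviation update $\bTheta_{k+1} = T_k\bTheta_k$ exactly realizes \cref{eq:ETKF_cov} so that the deterministic covariance recursion can be used verbatim — this is exactly the defining property of ETKF quoted in the excerpt, so it is available without further work.
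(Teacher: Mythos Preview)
Your Lyapunov-plus-telescoping strategy with $V_k = \bar e_k\matT P_k^{-1}\bar e_k$ is exactly the paper's approach, and your reduction of the ETKF updates to the standard Kalman recursions under \cref{asm:Auto_Lin} is correct. Two concrete points, however, do not go through as written.

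First, the coefficient $\sigma^{-2}$ in your displayed inequality is in the wrong direction. Without noise one obtains $V_k - V_k^{\mathrm{anal}} = (H_k\bar e_k)\matT(\tilde P_k+\sigma^2 I)^{-1}(H_k\bar e_k)$ with $\tilde P_k = H_kP_kH_k\matT$, and since $\tilde P_k\succeq 0$ this quantity is \emph{at most} $\sigma^{-2}\normeo{H_k\bar e_k}^2$, not at least. To get a usable lower bound one needs an upper bound on $\tilde P_k$, which the paper obtains from the ensemble bound \cref{eq:Asm_Auto_Bound}: $\norm{P_k}\le 2R^2$ gives $W_k\succeq (2M^2R^2+\sigma^2)^{-1}H_k\matT H_k$.

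Second, and more importantly, the noise-term handling via Young's inequality does not deliver the first-power bound $\normeo{\widehat{\beps}_k}$. Young naturally produces a $\normeo{\widehat{\beps}_k}^2$ residual, and your proposed conversion $\normeo{\widehat{\beps}_k}^2\le C\normeo{\widehat{\beps}_k}$ is not available: assumption \cref{eq:Asm_Auto_Bound} bounds $\btheta^\dagger$ and the ensemble, not $\widehat{\beps}_k$. Moreover, Young applied to the cross term leaves a factor $(1+\alpha)$ on $\normeo{(I-K_kH_k)\bar e_k}_{P_{k+1}}^2$, which after telescoping forces you to control $\sum_k V_k$ uniformly---and \emph{that} is where your worry about a lower bound on $P_k$ would become real. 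The paper sidesteps all of this: it rewrites the cross term as $2(K_k\widehat{\beps}_k)\matT P_{k+1}^{-1}(\mean{\btheta}_{k+1}-\btheta^\dagger)$, drops the negative square, and bounds directly via Cauchy--Schwarz using $\normeo{\mean{\btheta}_{k+1}-\btheta^\dagger}\le 2R$ together with the operator estimate $\norm{K_k\matT P_{k+1}^{-1}}\le \sigma^{-2}M$. This gives $4R\sigma^{-2}M\normeo{\widehat{\beps}_k}$ immediately, with no need for a uniform lower bound on $P_k$ (only invertibility of $P_0$, which is an initialization condition).
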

The proof is delayed to \cref{app:Pf_Auto_Converge}. We point out that the above result holds deterministically, which mainly comes from that we do not introduce extra noise in the mean update. Note that $\normeo{ \mean{\by}_k - \bh_k(\btheta^\dagger) }^2$ is the ETKF data fitting error at time $k$. 
The result above essentially indicates that in long term average, ETKF estimates can fit observation data well. This is because the right hand side of \cref{eq:Auto_Convergence} has two terms. The first term $C_1/T\to 0$ when $T\to \infty$. The second term is the average size of the observation noise $|\hat{\epsilon}_k|$, which tends to be small for dynamical system applications. 

It is also worth comparing the results here with existing ones from inverse problem literature \cite{MR3988266,MR4089506,MR4405495,MR4487558}. 
Existing results often bound the parameter estimation error, i.e. $\|\theta_k-\theta^\dagger\|$. While such results are stronger, they often require very restrictive assumptions, for example the observation map is constant $\bh_k\equiv \bh$ and $\normeo{ \bh(\btheta) - \bh(\btheta^\dagger)}^2$  is a strongly convex function. Such requirements are very difficult to satisfy or verify for dynamical system related problems. 

\subsection{Non-autonomous system}
For non-autonomous system, recall that we assume the true Koopman parameter $\btheta_k^\dagger$ follows some unknown dynamic (see \cref{tra_non})
\begin{equation}    \label{eq:NonAuto_TrueEvol}
    \btheta_{k+1}^\dagger = \btheta_k^\dagger + \delta \btheta_k. 
\end{equation}
As a remark, we do not require $\delta\btheta_k$ to be generated from $\mathcal{N}(0, Q_k)$, which was the modeling assumption to implement EnKF. Similar to the autonomous case, one can write the observation as 
\begin{equation}    \label{eq:NonAuto_Obs}
    \widehat{\by}_k = \bh_k(\btheta_k^\dagger) + \widehat{\epsilon}_k. 
\end{equation}

For non-autonomous system, we also assume the adapted linearization assumption
\begin{asm}  \label{asm:NonAuto_Lin}
There exists $H_k \in \mR^{D\times d} $ and norm bound $M$ s.t. 
\begin{equation}    \label{eq:NonAuto_Lin}
    \bh_k(\btheta_k^{(i)}) = \bh_k(\btheta_k^\dagger) + H_k(\btheta_k^{(i)} - \btheta_k^\dagger), \quad \norm{H_k} \leq M. 
\end{equation}
\end{asm}

\begin{theorem} \label{thm:NonAuto_Converge}
Consider the ETKF method \cref{eq:ETKF_mean}, \cref{eq:ETKF_cov} for Koopman spectral estimation. Under \cref{asm:NonAuto_Lin}, and assume that the parameters are uniformly bounded
\begin{equation}    \label{eq:Asm_NonAuto_Bound}
    \max \Big\{ \sup_k \normeo{\btheta_k^\dagger} , \sup_{k,i} \normeo{\btheta_k^{(i)}} \Big\} \leq R.
\end{equation}
Then there exist constants $C_1,C_2,C_3$ depending on $\sigma,M,R,Q$ and initialization, s.t. 
\begin{equation}    \label{eq:NonAuto_Convergence}
    \frac{1}{T} \sum_{k=0}^{T-1} \normeo{ \mean{\by}_k - \bh_k(\btheta_k^\dagger) }^2 \leq \frac{C_1}{T} + \frac{C_2}{T} \sum_{k=0}^{T-1} \normeo{\widehat{\beps}_k} + \frac{C_3}{T} \sum_{k=0}^{T-1} \normeo{ \delta \btheta_k } .
\end{equation}
\end{theorem}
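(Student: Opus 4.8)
# Proof Proposal for Theorem~\ref{thm:NonAuto_Converge}

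\textbf{Overall strategy.} The plan is to adapt the Lyapunov/telescoping argument used for the autonomous case (Theorem~\ref{thm:Auto_Converge}), tracking the evolution of the squared error $e_k := \mean{\btheta}_k - \btheta_k^\dagger$ in the metric induced by the inverse covariance $P_k^{-1}$. The key new feature relative to the autonomous setting is that the ``target'' $\btheta_k^\dagger$ is itself moving according to \cref{eq:NonAuto_TrueEvol}, so when we pass from step $k$ to step $k+1$ we must account for the drift $\delta\btheta_k$; this is exactly what will produce the extra $C_3$ term on the right-hand side of \cref{eq:NonAuto_Convergence}.

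\textbf{Key steps.} First, under \cref{asm:NonAuto_Lin}, the mean update \cref{eq:ETKF_mean} becomes an affine recursion in $e_k$: substituting $\bh_k(\btheta_k^{(i)}) = \bh_k(\btheta_k^\dagger) + H_k(\btheta_k^{(i)} - \btheta_k^\dagger)$ gives $\mean{\bh_k(\btheta_k)} = \bh_k(\btheta_k^\dagger) + H_k \mean{e}_k$ and $P_k^{\btheta,\by} = P_k H_k\matT$, $P_k^{\by,\by} = H_k P_k H_k\matT$, so that $\mean{\btheta}_{k+1} = \mean{\btheta}_k + P_k H_k\matT(H_k P_k H_k\matT + \sigma^2 I)^{-1}(\widehat{\by}_k - \bh_k(\btheta_k^\dagger) - H_k\mean{e}_k)$. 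Using $\widehat{\by}_k - \bh_k(\btheta_k^\dagger) = \widehat{\epsilon}_k$ and then subtracting $\btheta_{k+1}^\dagger = \btheta_k^\dagger + \delta\btheta_k$, one obtains an update of the form $e_{k+1} = (I - K_k H_k) e_k + K_k \widehat{\epsilon}_k - \delta\btheta_k$ with $K_k = P_k H_k\matT(H_k P_k H_k\matT + \sigma^2 I)^{-1}$. Second, I would show the covariance recursion \cref{eq:ETKF_cov} together with the uniform bounds \cref{eq:Asm_NonAuto_Bound} and $\norm{H_k}\le M$ implies $P_k^{-1}$ stays uniformly bounded above and below (the lower bound on $P_k^{-1}$, i.e. upper bound on $P_k$, may need the $+Q$ inflation or a boundedness-from-the-ensemble argument; the upper bound on $P_k^{-1}$ is the standard information-accumulation estimate, using the identity $P_{k+1}^{-1} \preceq P_k^{-1} + \sigma^{-2} H_k\matT H_k$ when $Q=0$, adjusted for $Q\ne 0$). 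Third, using the Kalman identity $(I-K_kH_k)\matT P_{k+1}^{-1}(I-K_kH_k) = P_k^{-1} - \text{(something positive)}$ — more precisely the fact that $(I-K_kH_k)\matT P_{k+1}^{-1} = P_k^{-1}$ up to the $Q$-term — I would compute $\langle e_{k+1}, P_{k+1}^{-1} e_{k+1}\rangle$, expand the square, and isolate a nonnegative ``data-fit'' decrement of size $\sim \sigma^{-2}\normeo{H_k e_k}^2 = \sigma^{-2}\normeo{\mean{\by}_k - \bh_k(\btheta_k^\dagger)}^2$ (since $\mean{\by}_k = \bh_k(\btheta_k^\dagger) + H_k\mean{e}_k$ by linearization), while the cross terms involving $\widehat{\epsilon}_k$ and $\delta\btheta_k$ are controlled by Young's inequality against this decrement, leaving residual contributions $\lesssim \normeo{\widehat{\epsilon}_k}$ and $\normeo{\delta\btheta_k}$ (using the uniform bounds to turn squared quantities into linear ones, just as in the autonomous proof). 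Finally, summing the resulting inequality $\langle e_{k+1},P_{k+1}^{-1}e_{k+1}\rangle \le \langle e_k, P_k^{-1}e_k\rangle - c\,\sigma^{-2}\normeo{\mean{\by}_k - \bh_k(\btheta_k^\dagger)}^2 + C'\normeo{\widehat{\epsilon}_k} + C''\normeo{\delta\btheta_k}$ over $k = 0,\dots,T-1$ telescopes, and dividing by $T$ and using $\langle e_T, P_T^{-1} e_T\rangle \ge 0$ and the boundedness of the initial term yields \cref{eq:NonAuto_Convergence}.

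\textbf{Main obstacle.} I expect the delicate point to be handling the moving target cleanly in the Lyapunov step: when we change metric from $P_k^{-1}$ to $P_{k+1}^{-1}$ and simultaneously shift the target by $\delta\btheta_k$, the algebra no longer collapses via the exact Kalman identity, and one must carefully show the $\delta\btheta_k$-dependent cross terms can be absorbed (via Young's inequality) without destroying the negative data-fit decrement — in particular, the coefficient $C_3$ must depend only on $\sigma,M,R,Q$ and not blow up as $T\to\infty$. A secondary technical nuisance is establishing the two-sided bound on $P_k$: the upper bound needs the noise inflation $Q$ (or an argument that the empirical covariance is bounded by $R$), and the lower bound (equivalently, that $P_k^{-1}$ does not grow without bound in directions never excited by $H_k$) requires either a uniform observability condition or, as here, is sidestepped because the $Q$-inflation keeps $P_k$ bounded below; one should be careful to state precisely which of these the constants implicitly rely on.
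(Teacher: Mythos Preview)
Your proposal is correct and follows essentially the same route as the paper: derive the error recursion $e_{k+1} = (I-K_kH_k)e_k + K_k\widehat{\beps}_k - \delta\btheta_k$, take the Mahalanobis norm in $P_{k+1}^{-1}$, exploit the identity $P_k^{-1} - (I-K_kH_k)\matT P_{k+1}^{-1}(I-K_kH_k) \succeq (2M^2R^2+\sigma^2)^{-1}H_k\matT H_k$ for the data-fit decrement, and telescope. One simplification worth noting: the paper does not need Young's inequality for the $\delta\btheta_k$ cross term --- it bounds $2(\delta\btheta_k)\matT P_{k+1}^{-1}(\mean{\btheta}_{k+1}-\btheta_k^\dagger)$ directly via $\norm{P_{k+1}^{-1}} \le \lambda_{\min}(Q)^{-1}$ (from $P_{k+1}\succeq Q$) and the uniform bound $2R$, which is precisely the mechanism you anticipate in your final paragraph.
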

The proof is delayed to \cref{app:Pf_NonAuto_Converge}. Similar to the autonomous case, our result shows EnKF is effective in long term data fitting. The upper bound \cref{eq:NonAuto_Convergence} has an additional term compared with \cref{eq:Auto_Convergence}, which is the average temporal change of the Koopman parameters. It is often small if the dynamical system evolves slowly or $\Delta t$ is relatively small.

\section{Numerical results}
\label{sec5}
In this section, we implement EnKF-DMD in different experiments to verify its effectiveness in extracting the Koopman modes and eigenvalues. \Cref{sec5_1} shows the performance of EnKF-DMD in a low-dimensional autonomous ordinary differential equation. The method is also applied to a high-dimensional autonomous system in \Cref{sec5_2}. In \Cref{sec5_3}, we estimate time-varying eigenpairs in a linear non-autonomous system. An example of a 2-dimensional partial differential equation is experimented with in \Cref{sec5_4} to reconstruct the solution for autonomous and non-autonomous systems.

\subsection{Autonomous ordinary differential equation}
\label{sec5_1}
We consider the following nonlinear ordinary differential equation in two variables:

\begin{equation}
\label{low-audim}
\left\{
\begin{aligned}
\dot{x}_1&=\mu x_1, \\
\dot{x}_2&=\lambda(x_2-x_1^2),
\end{aligned}
\right.
\end{equation}
where $\mu=-0.01$ and $\lambda=-0.5$. An observable subspace spanned by the measurements $\{x_1,x_2,x_1^2\}$ is a Koopman invariant subspace of the system \cref{low-audim}, and $\{x_1,x_2,x_1^2\}$ are called dictionary functions.  Then we obtain a linear dynamical system on these observables that advance the original state
$x$:
\begin{equation}
\label{linear-audim}
\diff{}{t}\begin{bmatrix}
x_1 \\
x_2\\
x_1^2
\end{bmatrix}=
\mathbf{A}\begin{bmatrix}
x_1 \\
x_2\\
x_1^2
\end{bmatrix}
=
\begin{bmatrix}
\mu & 0 & 0 \\
0 & \lambda & -\lambda\\
0 & 0 & 2\mu
\end{bmatrix}
\begin{bmatrix}
x_1 \\
x_2\\
x_1^2
\end{bmatrix}.
\end{equation}
The discrete-time Koopman operator corresponding to the system \cref{linear-audim} is defined by $\exp (\mathbf{A}\Delta t)$, where $\Delta t=1$ is the discrete time step. We apply DMD on the extended observables and estimate Koopman eigenvalues and modes. The eigenvalues calculated by DMD with a noise-free dataset are shown in \cref{Tab:table1} together with true eigenvalues. Now we consider the extended observables with noise, such that 
\[
\by_k \thicksim \mathcal{N}\bigg( \begin{bmatrix}
x_1 \\
x_2\\
x_1^2
\end{bmatrix},  \sigma^2\mathbf{I}\bigg).
\]
When the dataset has a noise with magnitude $\sigma=0.1$, there is estimation error in eigenvalues from DMD methods, as shown in \cref{Tab:table1}. The EnKF-DMD obtains a more accurate result compared to standard DMD. 

\begin{table}[htbp]
\centering
\caption{\footnotesize{Value of eigenvalues.}}
\begin{tabular}{ccccccc}
\hline
\small Eigenvalues  &  DMD-noise free & DMD-noise & EnKF-DMD \\ \hline

    0.9900 & 0.9900 & 0.9829 & 0.9825 \\
    0.9801 & 0.9802& 0.7821 & 0.9584 \\
    0.6065 & 0.6065 & 0.5549 & 0.6164 \\ \hline
\end{tabular}
\label{Tab:table1}
\end{table}

The observable can be reconstructed with updated estimation of eigenvalues and modes. According to \cref{low-audim}, we reconstruct the solution in the time interval $[0,100]$. It can be seen that the reconstruction by EnKF-DMD approximates truth well in \cref{Fig:exp01_rec}. 

\begin{figure}[htbp]
  \centering  \includegraphics[width=4.5in, trim = 120 0 120 0 clip]{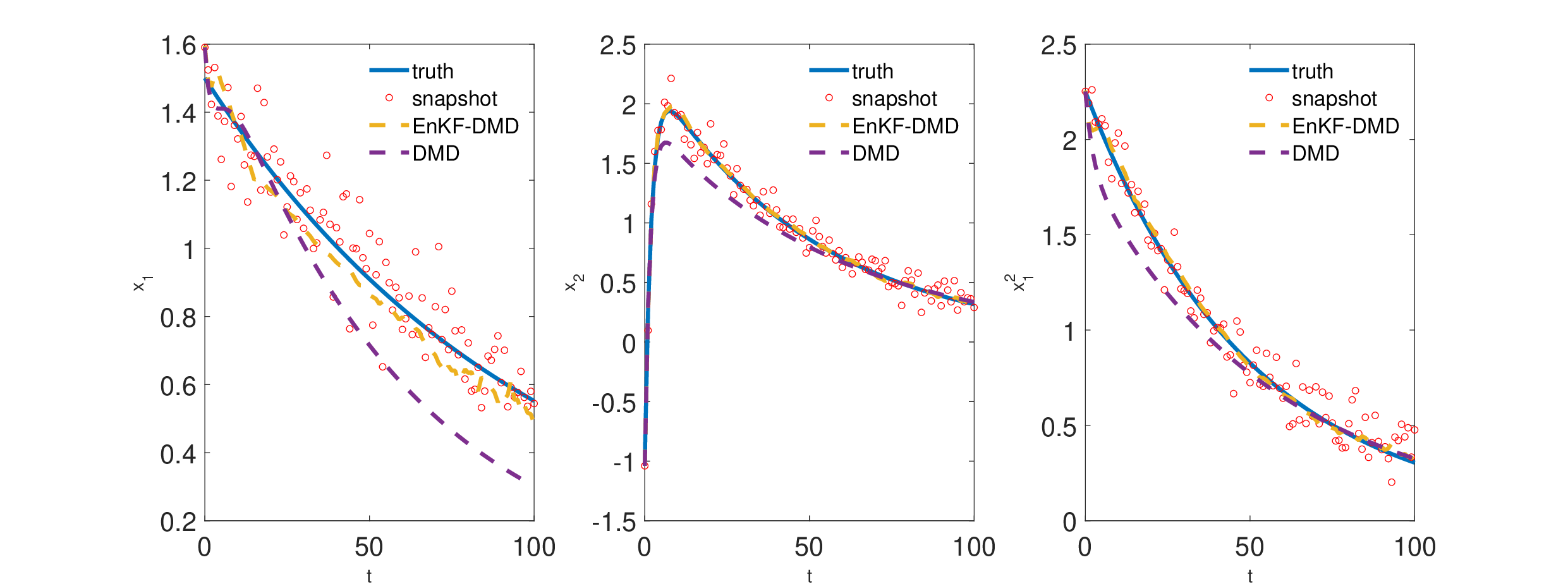}
  \caption{Reconstruction of observable functions.}
  \label{Fig:exp01_rec}
\end{figure}

We also investigate the application of EnKF-DMD in prediction. \cref{Fig:exp01_rec_pre} shows both the results of the reconstruction and prediction of EnKF-DMD. The construction is simulated in the time interval $[0,100]$ while the prediction is realized in $[100,200]$. The inaccurate estimation of $x_1$ at $t=100$ leads to a deviation of $x_1$ to truth in prediction. 

\begin{figure}[htbp]
  \centering
  \includegraphics[width=4.5in, trim = 120 0 120 0 clip]{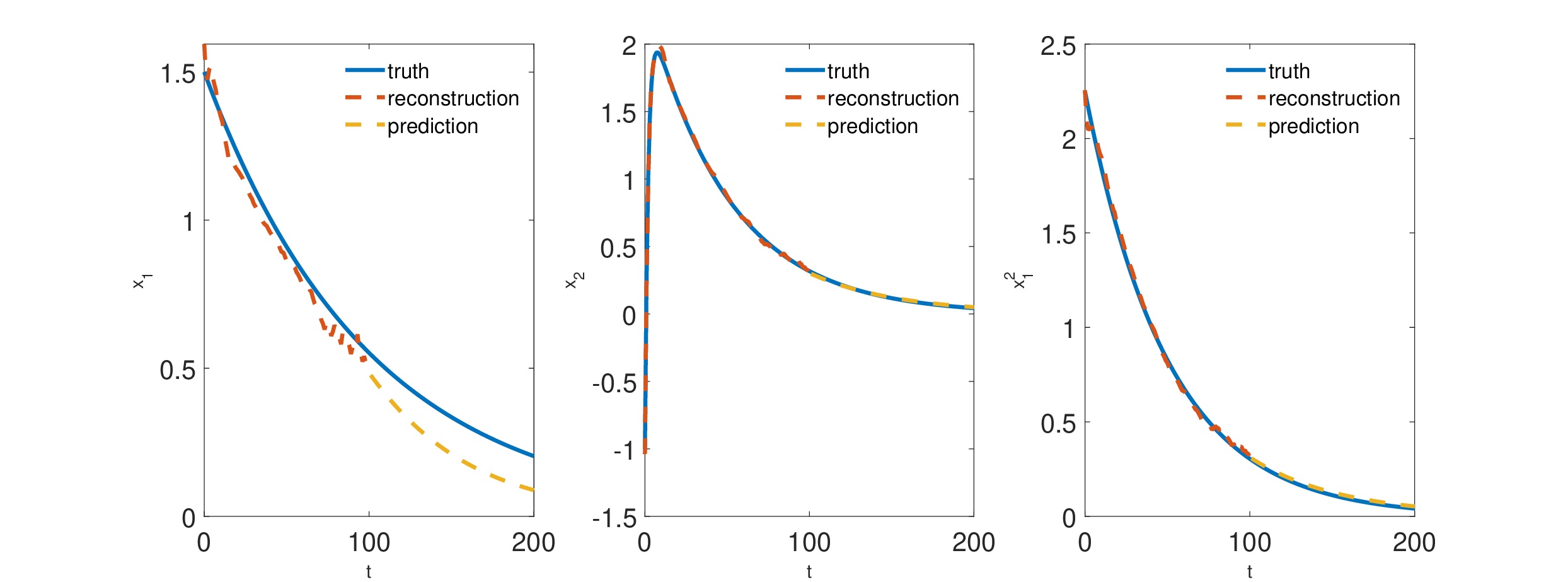}
  \caption{Reconstruction and prediction of observable functions.}
  \label{Fig:exp01_rec_pre}
\end{figure}
The root mean square error (RMSE) between the solution $\widehat{x}$ and the truth $x$ is computed by 
$${\rm RMSE}=\sqrt{\frac{\sum_{i=1}^{N}\norm{ \widehat{x}^i-x^i}^2}{N}}.$$ 
We compare the RMSE of EnKF-DMD and DMD-noise in \cref{Tab:table2}, which shows that EnKF-DMD has higher accuracy in the reconstruction and prediction of the solution.

\begin{table}[htbp]
\centering
\caption{\footnotesize{RMSE of EnKF-DMD compared to DMD with noise snapshots.}}
\begin{tabular}{ccccccc}
\hline
\small & DMD-noise & EnKF-DMD\\ \hline

   reconstruction & 0.1522  & 0.0355  \\
   prediction & 0.1210  & 0.0668  \\ \hline
\end{tabular}
\label{Tab:table2}
\end{table}

\subsection{Sparse linear dynamical in the Fourier domain}
\label{sec5_2}
In this example, we design a high-dimensional system to examine the performance of EnKF-DMD. The system is created with $K=5$ nonzero two-dimensional spatial Fourier modes with other modes that are exactly zero. Then we define a stable linear time-invariant dynamical system on the $K$ modes. Randomly choose a temporal oscillation frequency and a small damping rate for each mode independently. A linear combination of coherent spatial Fourier modes, oscillating at a different fixed frequency, with eigenvalue $\lambda=d+\mi\omega$ constructs the system in the spatial domain. The parameters are determined in \cref{Tab:table3}.  Let
\begin{equation}
\label{sys:Fourier}
    \widehat{x}({\rm I},{\rm J},t)= \mee^{\lambda({\rm I,J}) t}= \mee^{dt}(\cos(\omega t)+\mi\sin (\omega t)),
\end{equation}
where ${\rm I}$ and ${\rm J}$ are positions randomly chosen for five modes, and $\lambda({\rm I,J})$ is corresponding eigenvalue. It is also possible to allow other Fourier modes to be contaminated with Gaussian noise and impose a fast and stable dynamic in each direction. Then
\[\widetilde{x}(t)=\widehat{x}(t)+\mathcal{N}(0,\sigma^2),\]
where $\sigma=10^{-3}$. We create snapshots by taking the real part of the inverse Fourier transform $\mathcal{F}^{-1}$:
\[
x(t)={\rm real}(\mathcal{F}^{-1}(\widetilde{x}(t))).
\]
\begin{table}[htbp]
	\centering
	\setlength{\abovecaptionskip}{0cm}
	\setlength{\belowcaptionskip}{0.2cm}
	\caption{Parameters of the system \cref{sys:Fourier}.}
	\scalebox{0.8}{
		\begin{tabular}{cccccc}
			\toprule
			Modes & 1 & 2 & 3 & 4 & 5 \\
			\midrule
			& & & & & \\[-6pt]
			(I, J) & (3, 2) & (5, 2)  & (7, 9)  & (3, 5)  & (9, 8)  \\
			\cline{1-6}
			& & & & & \\[-6pt]
			 d={\rm real}($\lambda$) &-0.0755 & -0.0839 & -0.0414 & -0.0175 & -0.0702\\
                \cline{1-6}
			& & & & & \\[-6pt]
              $\omega$={\rm imag}($\lambda$) & 12.0967 & 11.0315 & 8.1959 & 6.2861 & 1.6272\\
                \cline{1-6}
			& & & & & \\[-6pt]
              Initial condition &0.2954 &-0.7263 &-0.4689 &0.3091 & 0.3857 \\
			\bottomrule
		\end{tabular}
	}
        \label{Tab:table3}
\end{table}
\cref{Fig:fourier_sys} shows that Fourier mode coefficients generate distinct spatial coherent patterns and contribute to the spatial structures. It displays the spatial structures of the system with noise level $\sigma=0.004$. \cref{Fig:FFT} shows real and imaginary parts of five modes. We use compressed DMD \cite{brunton2016compressed} in this example for comparison to accelerate the process of DMD in high dimensions. Compressed DMD is a method that computes DMD on the compressed data of original full-state snapshot data. Then the full-state DMD modes are reconstructed according to the compressed DMD transformations. Compressed DMD does not extract the spatial modes in the case of noisy snapshots, as shown in \cref{Fig:cDMD}. The mode reconstruction is displayed in \cref{Fig:oDMD}. As seen in \cref{Fig:oDMD}, the EnKF-DMD works extremely well even if there is background noise. The EnKF-DMD starts with compressing the full-state snapshots, performing DMD, and then updating the modes in the filtering. Suppose that the compressed datasets are $Y_0'$ and $Y_1'$. Then we apply \cref{alg:1} to the data and obtain $\bm{V}'_r$,$\bm{\Sigma}_r'$ and $\widetilde{\bm{\phi}}'$. The reconstruction of modes of $Y_0$ and $Y_1$ is given by
\[\bm{\phi}=Y_1\bm{V}'_r\bm{\Sigma}_r'^{-1}\widetilde{\bm{\phi}}'.\]
The eigenvalues estimated by EnKF-DMD match the true eigenvalues better than that by compressed DMD in \cref{Fig:fourier_eig}. But there still exists a deviation between our estimates and the truth. In this example, it is important to determine the number of time-delay coordinates to avoid high computational cost. We use the adaptive method in \Cref{sec3_4} to determine delay coordinates.

\begin{figure}[htbp]
  \centering
  \includegraphics[width=4.5in]{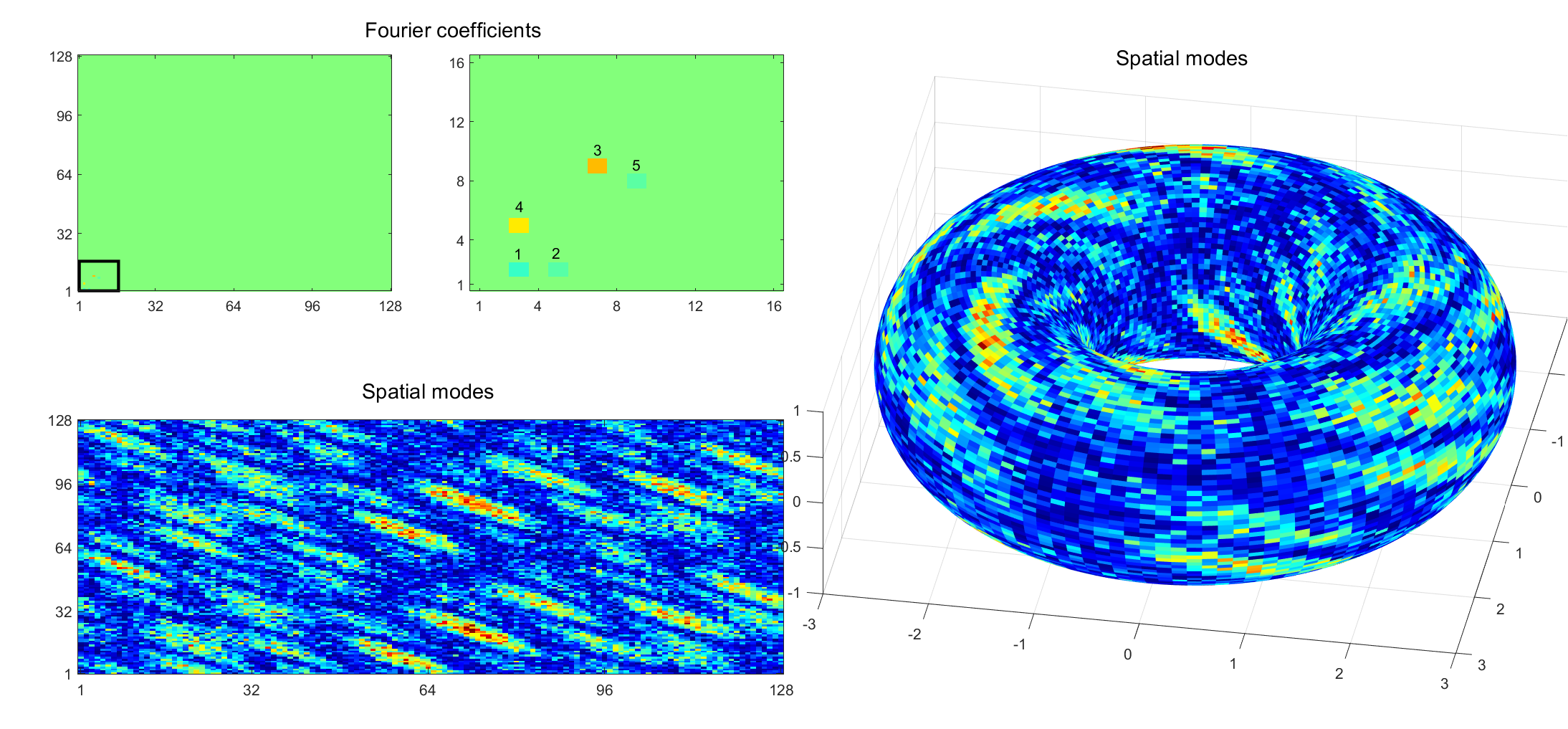}
  \caption{Five coefficients in the Fourier domain and spatial structures of the dynamical system.}
  \label{Fig:fourier_sys}
\end{figure}

\begin{figure}[htbp]
  \centering
  \subfigure[True modes]{
  \includegraphics[width=4in]{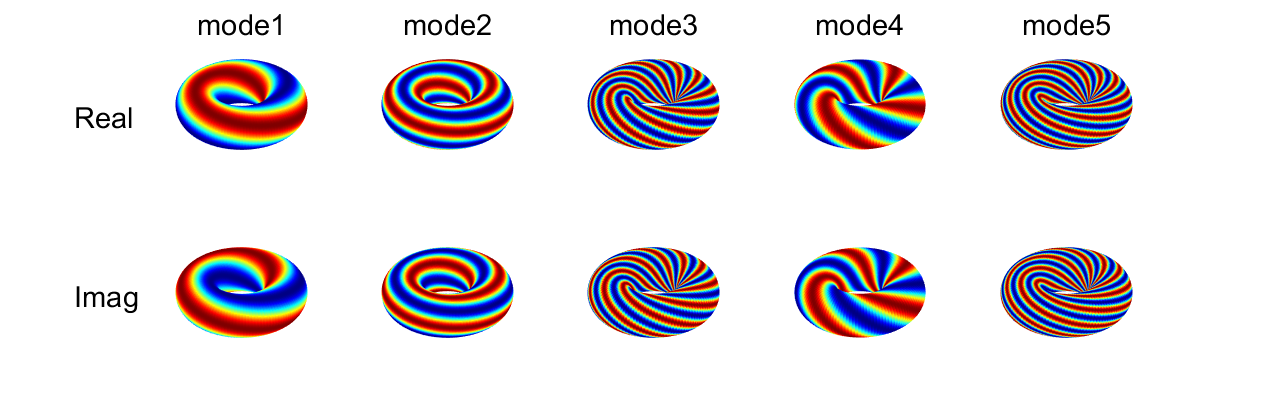}
  }
  \label{Fig:FFT}
  \quad
  \subfigure[Modes by compressed DMD]{
  \includegraphics[width=4in]{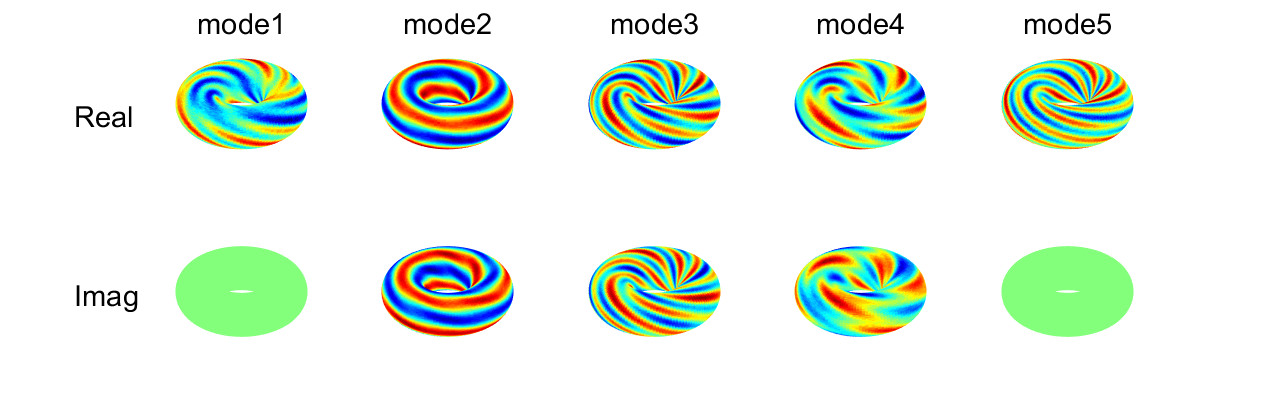}
  }
  \label{Fig:cDMD}
  \quad
  \subfigure[Modes by EnKF-DMD]{
  \includegraphics[width=4in]{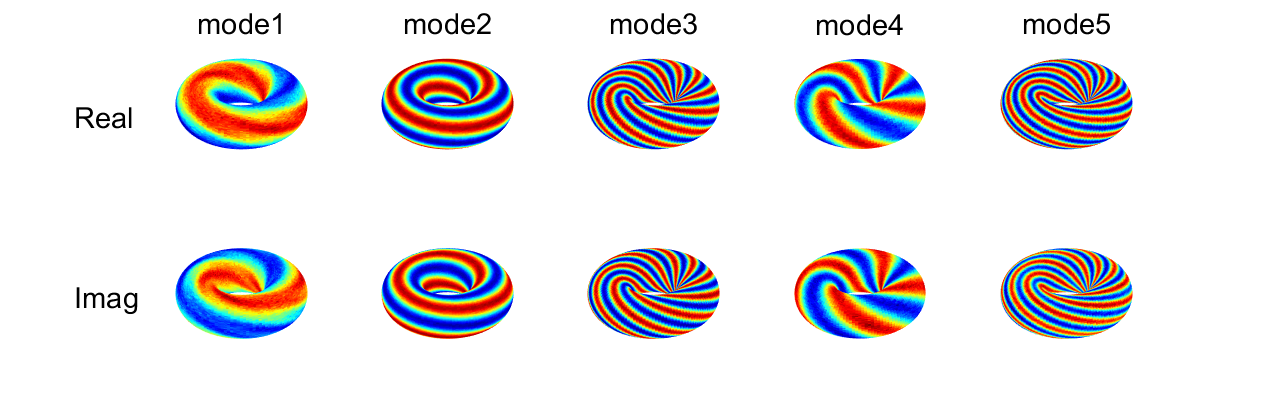}
  }
  \label{Fig:oDMD}
  \caption{True DMD modes, modes estimated by compressed DMD and EnKF-DMD.}
\end{figure}

\begin{figure}[htbp]
  \centering
  \includegraphics[width=3in]{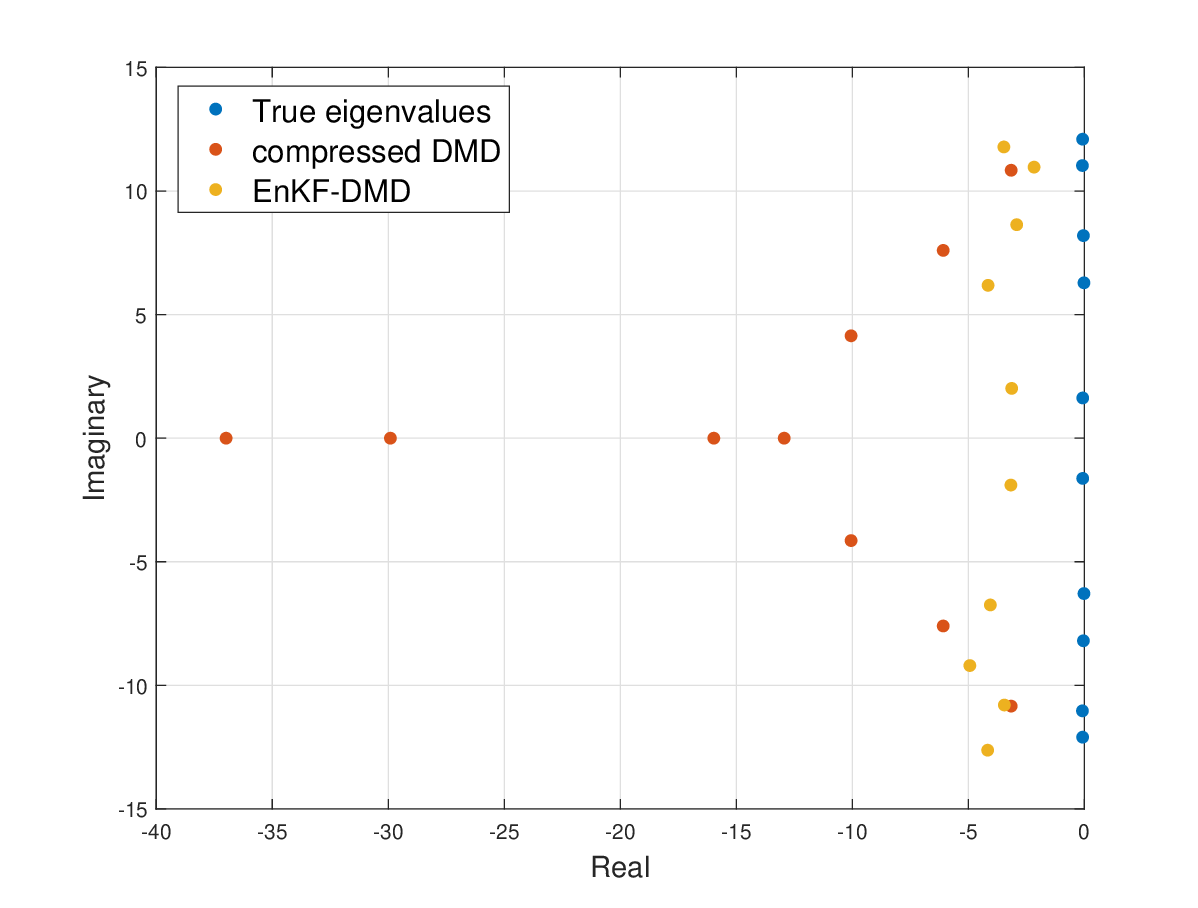}
  \caption{True eigenvalues, eigenvalues by compressed DMD and EnKF-DMD.}
  \label{Fig:fourier_eig}
\end{figure}

\subsection{Linear non-autonomous system}
\label{sec5_3}
Consider a non-autonomous ordinary differential equation
\begin{equation}
\label{ode_nonauto}
\dot{\textbf{x}}=\textbf{A}(t)\textbf{x},
\end{equation}
where $\textbf{x}$ is the d-dimensional state and $\textbf{A}(t)$ is a time-dependent matrix. Let us determine
\begin{equation}
\textbf{A}(t) =
\begin{aligned}
\begin{pmatrix}
\sigma(t) & \omega(t)\\
-\omega(t) & \sigma(t)
\end{pmatrix}.
\end{aligned}
\end{equation}
The eigenvalues of the underlying matrices are $\lambda_1(t)=\sigma(t)+\omega(t) \mi$ and $\lambda_2(t)=\sigma(t)-\omega(t)\mi$. Set $\sigma(t)=\cos t$ and $\omega(t)=2$. In this case, the matrices are commutative and the fundamental matrix of the system can be analytically obtained by
\[
\mathcal{M}(t,t_0)= \mee^{\alpha(t,t_0)} \begin{pmatrix}
\cos \beta(t,t_0)  & \sin \beta(t,t_0)\\
-\sin \beta(t,t_0) & \cos \beta(t,t_0)
\end{pmatrix},
\]
where
\[\alpha(t,t_0)=\int_{t_0}^t  \sigma({\tau}) \mdd\tau \quad \text{and} \quad \beta(t,t_0)=\int_{t_0}^t \omega({\tau}) \mdd\tau.\]
Then the eigenvalues 
of matrix $\mathcal{M}(t,t_0)$ are $\mu(t,t_0)=\mee^{\alpha(t,t_0)\pm 2t\mi}$. We discrete the system with a uniform time interval $\Delta t=0.01$. 

If observable function satisfies $\bg(x)=x$, the fundamental matrix $\mathcal{M}(t_k,t_{0})$ is exactly Koopman operator $\mathcal{K}(t_k,t_{0})$. To see how the proposed algorithm performs in realistic situations, we add noise to the exact solution of the system \cref{ode_nonauto} and consider observables 
\[\by_k = \textbf{x}_k+\varepsilon,\quad \varepsilon\thicksim\mathcal{N}(0,\delta^2\mathbf{I}),\]
where $\delta=0.1$. The results in \cref{Fig:eig_c02} present the data with Gaussian noise of the standard deviation $\delta$. In the application of EnKF-DMD on the data,  we approximate time-varying eigenvalues of Koopman matrix $\mathcal{K}(t_k,t_{0})$. The results are displayed in \cref{Fig:eig_c01}, where we observe that the principal Koopman eigenvalues clearly show oscillation, and eigenvalues by EnKF-DMD fit the truth well. The solution of the system \cref{ode_nonauto} is also reconstructed by estimated eigenvalues and modes, as shown in \cref{Fig:eig_c02}. The method corrects the trajectory of the system from noisy snapshots. 

 \begin{figure}[htbp]
   \centering
   \includegraphics[width=4.5in, trim = 100 0 100 0 clip]{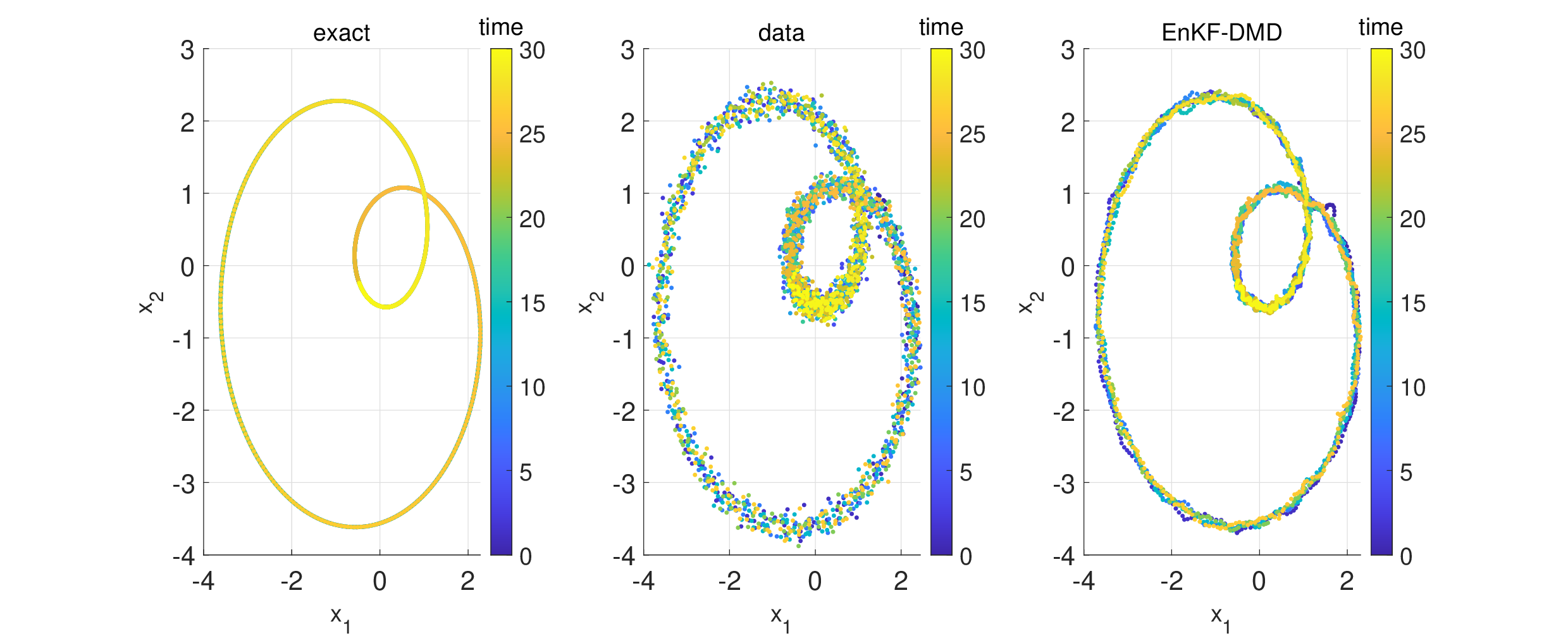}
   \caption{Trajectory of solution of system \cref{ode_nonauto} with $\sigma(t)=\cos(t)$.}
   \label{Fig:eig_c02}
 \end{figure}
 
 \begin{figure}[htbp]
   \centering
   \includegraphics[width=4.5in, trim = 100 0 100 0 clip]{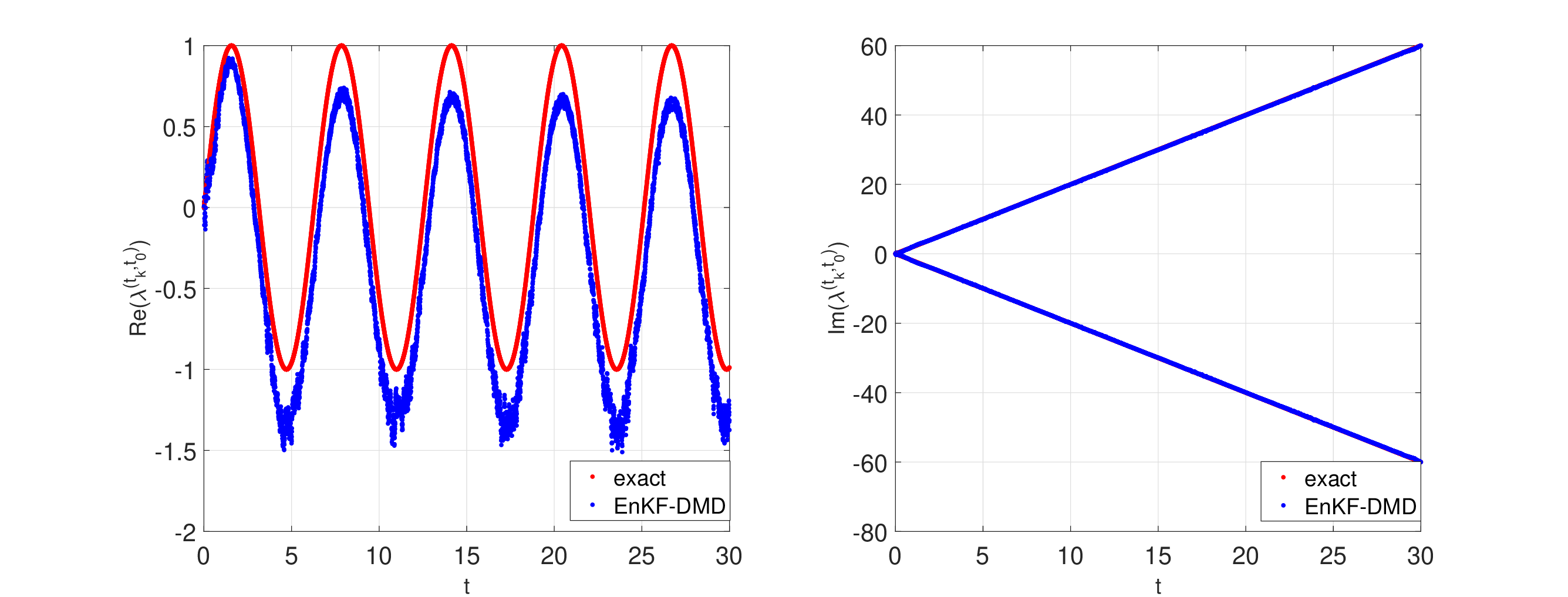}
   \caption{Real and Imaginary parts of Koopman eigenvalues with $\sigma(t)= \cos(t)$.}
   \label{Fig:eig_c01}
 \end{figure}

When we make $\sigma(t) = \cos(t/3)$ to decrease the frequency of the system \cref{ode_nonauto}, the performance of EnKF-DMD is not good anymore. In \cref{Fig:eig_c03}, there is a stable error between the eigenvalues estimated by EnKF-DMD and the truth in real parts. The reason comes from the algorithm itself because the time-delay observables only provide useful information in a specific short period. However, the solution reconstructed by the estimates improves accuracy compared to noisy data, as shown in \cref{Fig:eig_c04}.

 \begin{figure}[htbp]
   \centering
   \includegraphics[width=4.5in, trim = 100 0 100 0 clip]{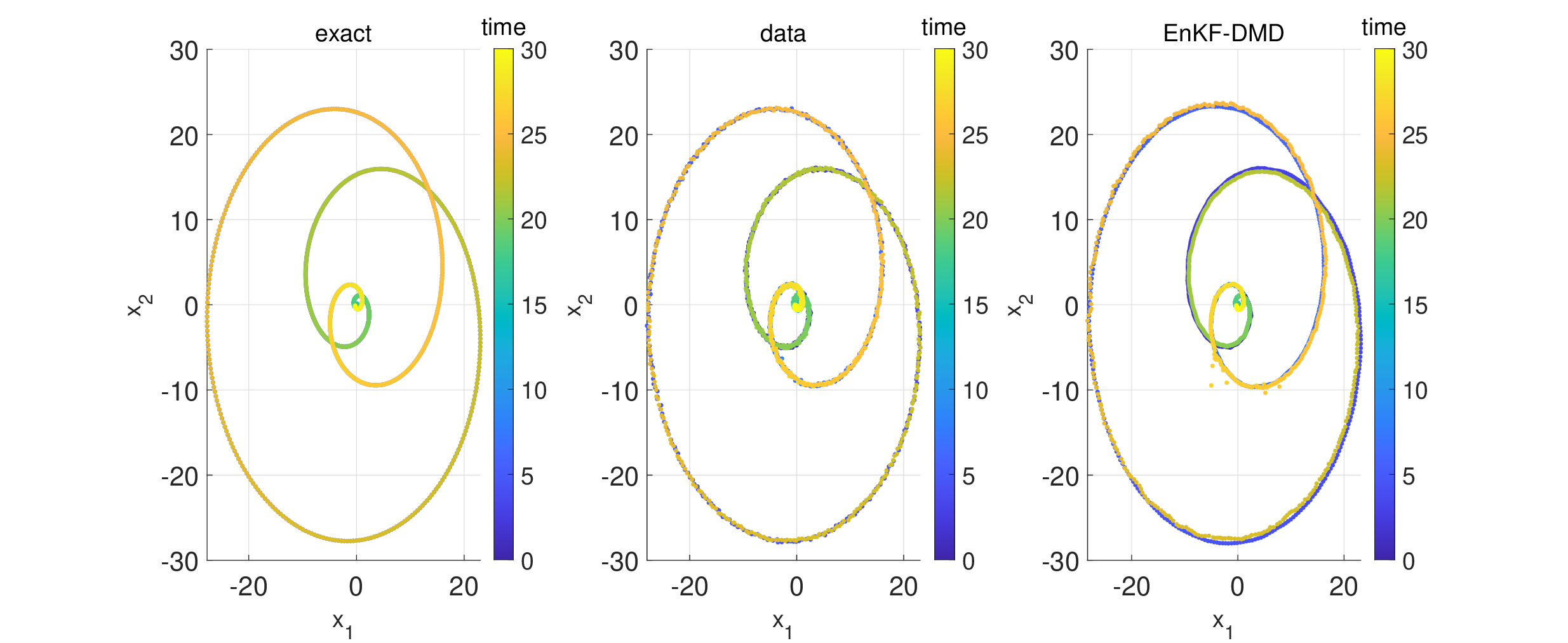}
   \caption{Trajectory of solution of system \cref{ode_nonauto} with $\sigma(t)=\cos(t/3)$.}
   \label{Fig:eig_c04}
 \end{figure}
 
  \begin{figure}[htbp]
   \centering
   \includegraphics[width=4.5in, trim = 100 0 100 0 clip]{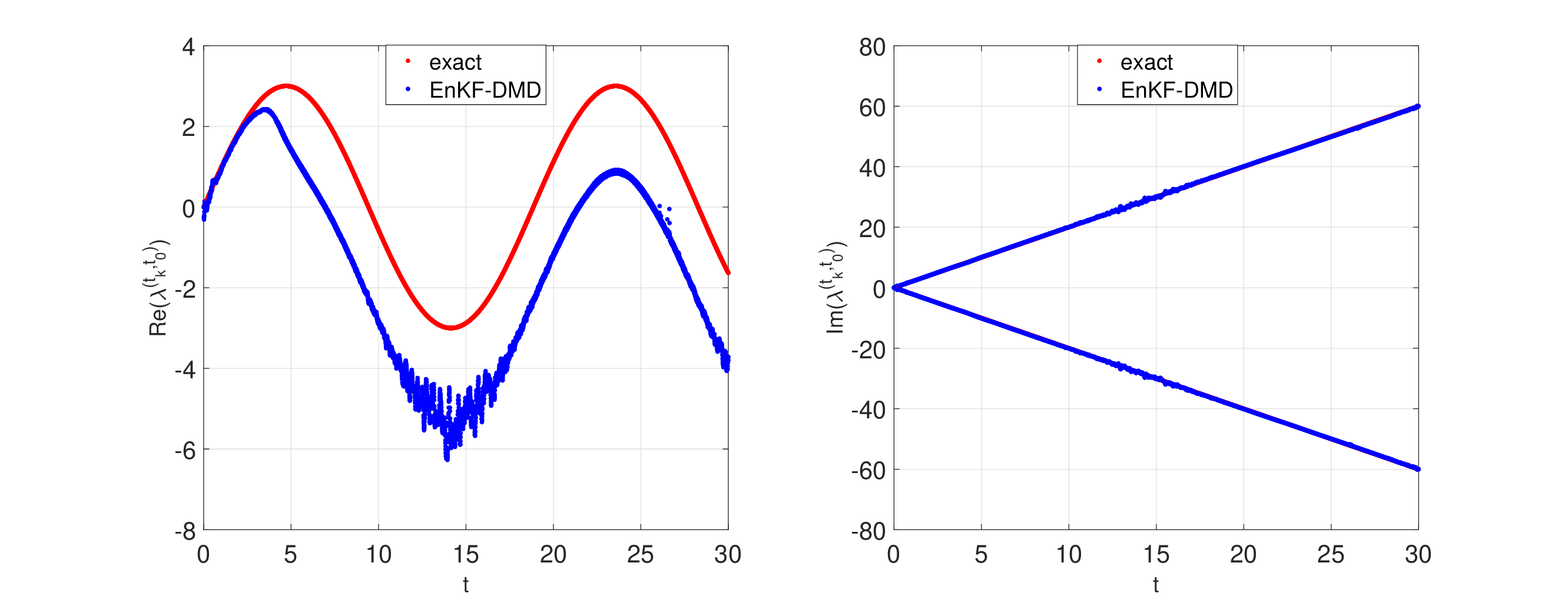}
   \caption{Real and Imaginary parts of Koopman eigenvalues with $\sigma(t)=\cos(t/3)$.}
   \label{Fig:eig_c03}
 \end{figure}

\subsection{Allen-Cahn equation}
\label{sec5_4}
The numerical example of this subsection is designed to test EnKF-DMD in a semilinear parabolic partial differential equation. We consider the Allen-Cahn equation
\begin{equation}
\label{ac_equation}
\begin{cases}
u_t = \nabla \cdot (\theta \nabla u) - \mu (u - u^3), & (t,x,y) \in [0,5] \times (0,2\pi) \times (0,2\pi), \\
u(x,y)|_{\Gamma_1} = 0, & \\
\frac{\partial u}{\partial n}(x,y)|_{\Gamma_2} = 0, &
\end{cases}
\end{equation}
where $\theta=0.1, \Gamma_1= \{0\} \times[0,2\pi] \bigcup [0,2\pi] \times \{0\} $ and $\Gamma_2= \{2\pi\} \times[0,2\pi] \bigcup [0,2\pi]\times \{2\pi\} $. The initial
condition is
$$u_0(x,y)=0.05\sin x\cdot \sin y.$$
In this example, we divide the dynamical system \cref{ac_equation} into autonomous and non-autonomous cases according to the function of $\mu$. First, when $\mu=1$, it determines an autonomous system.  We divide the spatial domain in each direction with a grid size $\Delta x=\Delta y=2\pi/50$.\ and discretize the temporal domain with a step size of $\Delta t=0.05$. The noisy snapshots are obtained by the exact solution with Gaussian noise $\varepsilon$ of noise magnitude $\sigma=0.1$. We generate snapshots in the interval $t\in[0,2]$. We use Extended DMD (EDMD) \cite{williams2015data}, a DMD method with extended observables, to extrapolate numerical solutions of this parabolic equation. The extended observable is $[u,u^3]$ in this example. We visualize the predicted solution of  EDMD and EnKF-DMD in \cref{Fig:ac_solution}. We only show the solution from $t=2$ to $t=5$, because the solution keeps stable after $t=5$. EDMD is not accurate at all since it is ineffective with noisy data. In comparison, EnKF-DMD achieves remarkably improved accuracy. We also determine the number of time-delay coordinates adaptively and take $n=6$.

\begin{figure}[htbp]
  \centering
  \includegraphics[width=4in, trim = 120 40 120 0 clip]{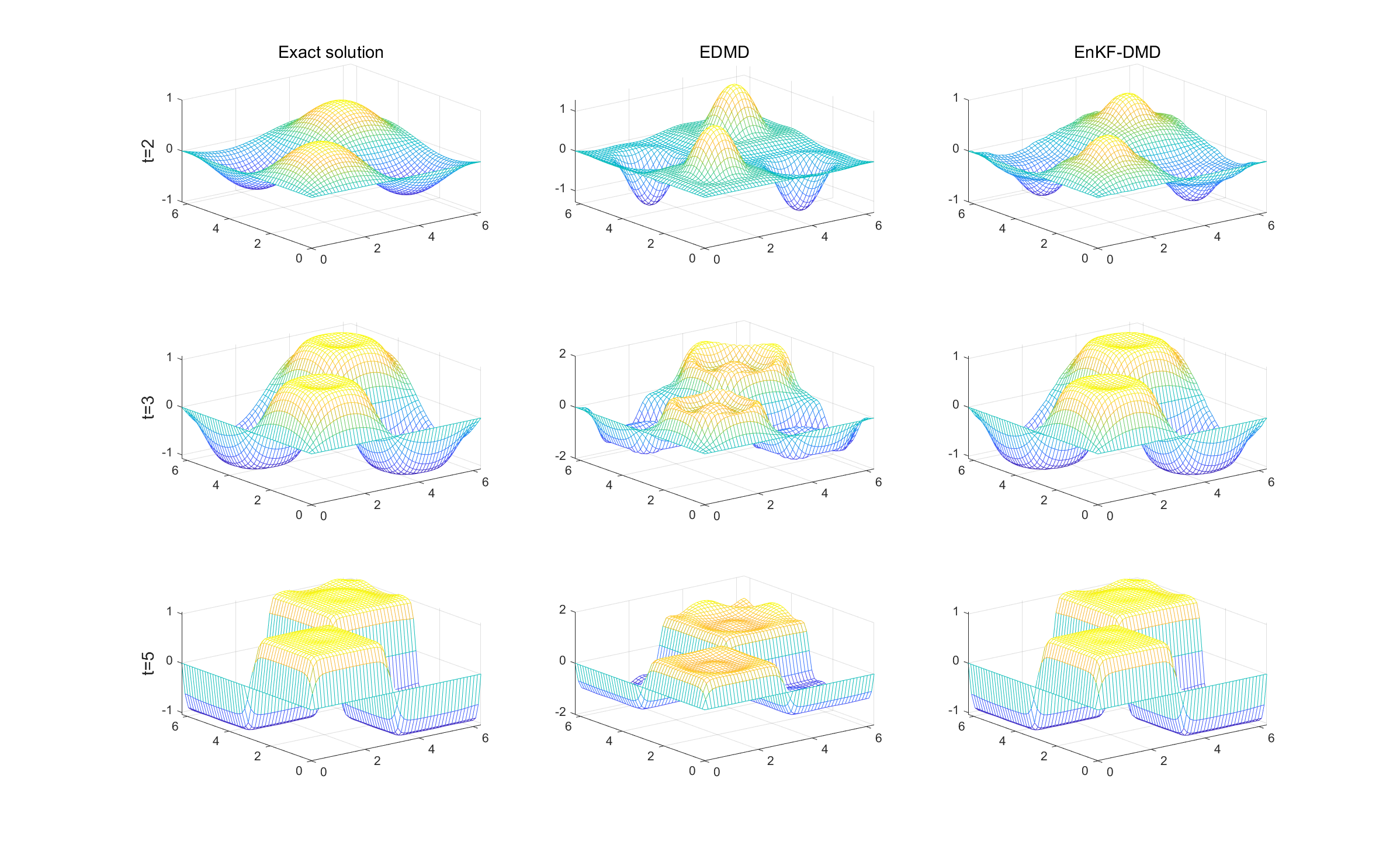}
  \caption{Predicted solution of system \cref{ac_equation} at different time with $\mu=1$.}
  \label{Fig:ac_solution}
\end{figure}

Let $\mu(t)=\sin (t)$, then the system \cref{ac_equation} becomes a non-autonomous system. The solutions for EnKF-DMD are presented in \cref{Fig:ac_solution_non}. The method extracts time-varying structures and models the system well in the case of noisy data.  

\begin{figure}[htbp]
  \centering
  \includegraphics[width=4in, trim = 120 40 120 0 clip]{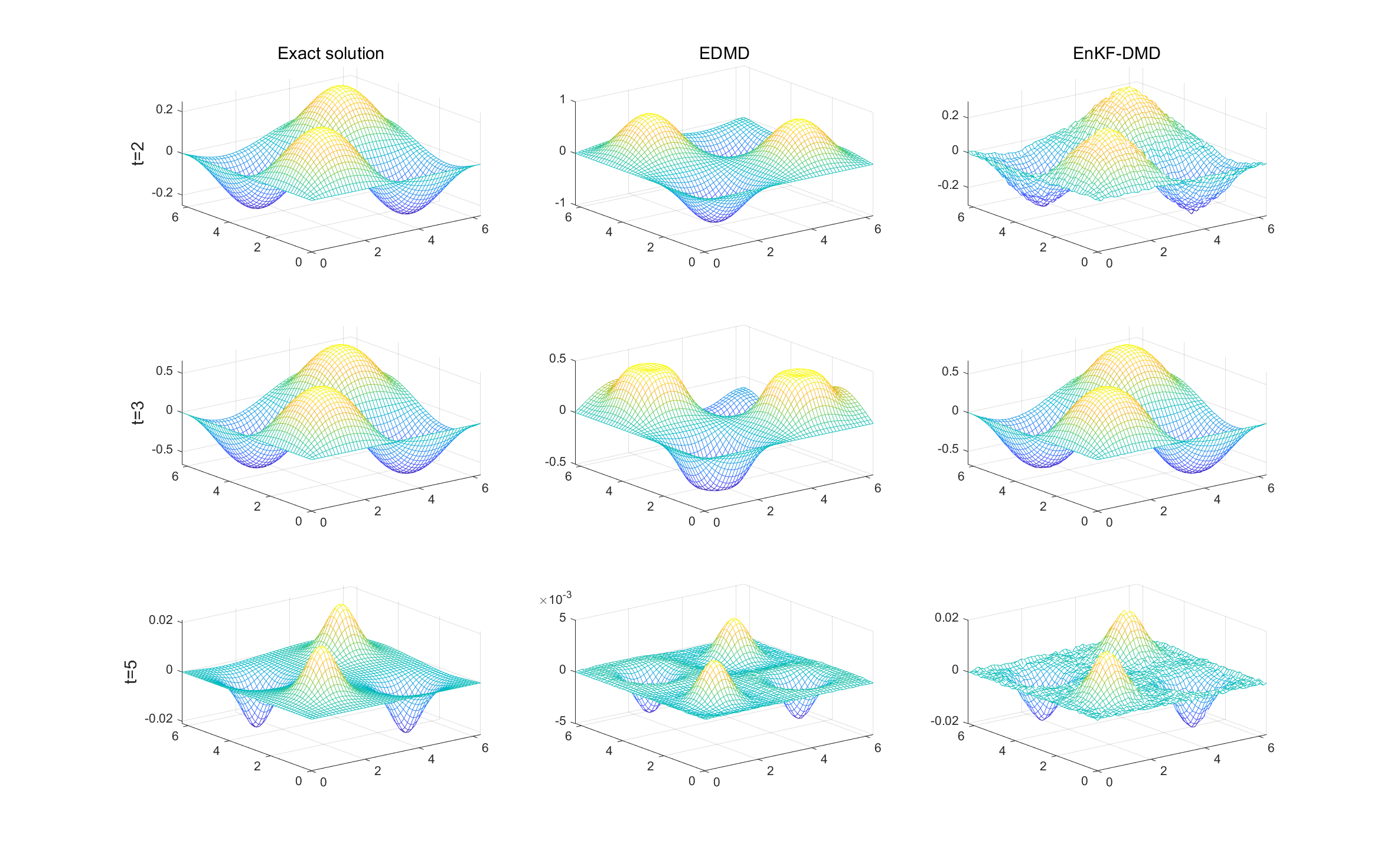}
  \caption{Predicted solution of system \cref{ac_equation} at different time with $\mu(t)=\sin (t)$.}
  \label{Fig:ac_solution_non}
\end{figure}

\section{Conclusion}    \label{sec6}
In this paper, we proposed an EnKF-DMD method that uses the Kalman filter to estimate Koopman modes and eigenvalues. Due to the advantage of EnKF, the method was applied to improve the accuracy of DMD algorithms with noisy snapshots. We use an identity map as a forecast map for modes and eigenvalues, which are the states to be estimated, as they are constants for autonomous systems. For non-autonomous systems, the modes and eigenvalues are time-varying variables. We use a random walk model to approximate the underlying dynamics, so EnKF can be applied. 
Theoretical analysis was also provided for both cases of systems. We demonstrate the effectiveness of our methods numerically on some linear models where explicit solutions are available and the Allen-Cahn equation. 

\bibliographystyle{siam}
\bibliography{references}

\appendix
\section{Proofs}
\subsection{Proof for autonomous system}  \label{app:Pf_Auto_Converge}
We will use the Mahalanobis norm w.r.t. a positive definite matrix $P$, 
\begin{equation}    \label{eq:Mnorm}
    \norme{\btheta}_P^2 := \btheta\matT P^{-1} \btheta. 
\end{equation}

\begin{proof}[Proof for \cref{thm:Auto_Converge} ]
Under \cref{asm:Auto_Lin}, one can simplify 
\[
    \widehat{\by}_k - \bh_k(\btheta_k^{(i)}) = \bh_k(\btheta^\dagger) - \bh_k(\btheta_k^{(i)}) + \widehat{\beps}_k = - H_k ( \btheta_k^{(i)} - \btheta^\dagger ) + \widehat{\beps}_k.
\]
where we also use \cref{eq:Auto_Obs}. The mean update \cref{eq:ETKF_mean} can then be written as
\begin{equation}    \label{eq:Pf_Evolmean}
    \mean{\btheta}_{k+1} = \mean{\btheta}_k - K_k H_k ( \mean{\btheta}_k - \btheta^\dagger ) + K_k \widehat{\beps}_k,  
\end{equation}
or more compactly,
\[
    \mean{\btheta}_{k+1} - \btheta^\dagger = ( I - K_k H_k )( \mean{\btheta}_k - \btheta^\dagger ) + K_k \widehat{\beps}_k. 
\]
Taking the Mahalanobis norm $\normeo{\cdot}_{P_{k+1}}$ on both side, 
\begin{align*}
    & \normeo{ \mean{\btheta}_{k+1} - \btheta^\dagger }_{P_{k+1}}^2\\
    =~& \normeo{ ( I - K_k H_k )( \mean{\btheta}_k - \btheta^\dagger ) }_{P_{k+1}}^2 + \normeo{ K_k \widehat{\beps}_k }_{P_{k+1}}^2 + 2 \Brac{ K_k \widehat{\beps}_k }\matT P_{k+1}^{-1} ( I - K_k H_k )( \mean{\btheta}_k - \btheta^\dagger ).
\end{align*}
Therefore, 
\begin{equation}    \label{eq:pf_Decomp}
\begin{split}
    & \normeo{ \mean{\btheta}_{k+1} - \btheta^\dagger }_{P_{k+1}}^2 - \normeo{ ( I - K_k H_k )( \mean{\btheta}_k - \btheta^\dagger ) }_{P_{k+1}}^2 \\
    =~& \normeo{ K_k \widehat{\beps}_k }_{P_{k+1}}^2 + 2 \Brac{ K_k \widehat{\beps}_k }\matT P_{k+1}^{-1} ( I - K_k H_k )( \mean{\btheta}_k - \btheta^\dagger ) \\
    =~& \normeo{ K_k \widehat{\beps}_k }_{P_{k+1}}^2 + 2 \Brac{ K_k \widehat{\beps}_k }\matT P_{k+1}^{-1} ( \mean{\btheta}_{k+1} - \btheta^\dagger - K_k \widehat{\beps}_k ) \\
    =~& 2 \Brac{ K_k \widehat{\beps}_k }\matT P_{k+1}^{-1} ( \mean{\btheta}_{k+1} - \btheta^\dagger ) - \normeo{ K_k \widehat{\beps}_k }_{P_{k+1}}^2 \\
    \leq~& 2 \Brac{ K_k \widehat{\beps}_k }\matT P_{k+1}^{-1} ( \mean{\btheta}_{k+1} - \btheta^\dagger ). 
\end{split}
\end{equation}
The second term (corresponding the observation error) can be bounded by 
\begin{equation}    \label{eq:Pf_nos_ctrl}
\begin{split}
    2 \Brac{ K_k \widehat{\beps}_k }\matT P_{k+1}^{-1} &( \mean{\btheta}_{k+1} - \btheta^\dagger ) = 2 \widehat{\beps}_k \matT K_k\matT P_{k+1}^{-1} ( \mean{\btheta}_{k+1} - \btheta^\dagger ) \\
    \leq~& 2 \normeo{ \widehat{\beps}_k } \normo{ K_k\matT P_{k+1}^{-1} } \normeo{ \mean{\btheta}_{k+1} - \btheta^\dagger } \leq 4 R \sigma^{-2} M \normeo{ \widehat{\beps}_k }.
\end{split}
\end{equation}
where we use $ \normo{ K_k\matT P_{k+1}^{-1} } \leq \sigma^{-2} M$ by \cref{lem:KP_Ctrl} and $ \normeo{ \mean{\btheta}_{k+1} - \btheta^\dagger } \leq 2R $ by assumption \cref{eq:Asm_Auto_Bound}. Denote $W_k = P_k^{-1} - (I-K_k H_k)\matT P_{k+1}^{-1} (I-K_k H_k)$, then 
\begin{align*}
    &\normeo{ \mean{\btheta}_k - \btheta^\dagger }_{P_k}^2 - \normeo{ ( I - K_k H_k )( \mean{\btheta}_k - \btheta^\dagger ) }_{P_{k+1}}^2 \\ 
    =~& (\mean{\btheta}_k -  \btheta^\dagger)\matT \Rectbrac{ P_k^{-1} - (I-K_k H_k)\matT P_{k+1}^{-1} (I-K_k H_k) } ( \mean{\btheta}_k - \btheta^\dagger ) \\
    =~& (\mean{\btheta}_k -  \btheta^\dagger)\matT W_k ( \mean{\btheta}_k - \btheta^\dagger ).
\end{align*}
Plug it and \cref{eq:Pf_nos_ctrl} into \cref{eq:pf_Decomp}, one obtain
\begin{equation}    \label{eq:Pf_Decay}
    \normeo{ \mean{\btheta}_{k+1} - \btheta^\dagger }_{P_{k+1}}^2 \leq \normeo{ \mean{\btheta}_k - \btheta^\dagger }_{P_k}^2 - (\mean{\btheta}_k - \btheta^\dagger)\matT W_k ( \mean{\btheta}_k - \btheta^\dagger ) + 4 \sigma^{-2} M R \normeo{ \widehat{\beps}_k }.
\end{equation}
By \cref{lem:W_bound}, $W_k \succeq \mu^{-1} H_k\matT H_k$ for $\mu = 2 M^2 R^2 + \sigma^2$, so that
\[
    (\mean{\btheta}_k - \btheta^\dagger)\matT W_k ( \mean{\btheta}_k - \btheta^\dagger ) \geq \mu^{-1} \normeo{ H_k ( \mean{\btheta}_k - \btheta^\dagger) }^2 = \mu^{-1} \normeo{ \mean{\by}_k - \bh_k(\btheta^\dag) }^2, 
\]
where we use the relation $\mean{\by}_k - \bh_k(\btheta^\dag) = H_k ( \mean{\btheta}_k - \btheta^\dagger) $. So that \cref{eq:Pf_Decay} implies
\[
    \mu^{-1} \normeo{ \mean{\by}_k - \bh_k(\btheta^\dag) }^2 \leq \normeo{ \mean{\btheta}_k - \btheta^\dagger }_{P_k}^2 - \normeo{ \mean{\btheta}_{k+1} - \btheta^\dagger }_{P_{k+1}}^2 + 4 \sigma^{-2} M R \normeo{ \widehat{\beps}_k }.
\]
Taking summation over $k=0,1,\dots,T-1$ gives
\[
    \mu^{-1} \sum_{k=0}^{T-1} \normeo{ \mean{\by}_k - \bh_k(\btheta^\dag) }^2 \leq \normeo{ \mean{\btheta}_0 - \btheta^\dagger }_{P_0}^2 - \normeo{ \mean{\btheta}_T - \btheta^\dagger }_{P_T}^2 +  4 \sigma^{-2} M R \sum_{k=0}^{T-1} \normeo{ \widehat{\beps}_k }. 
\]
Denote $C_1 = (2 M^2 R^2 + \sigma^2) \normeo{ \mean{\btheta}_0 - \btheta^\dagger }_{P_0}^2 $ and $ C_2 = 4 \sigma^{-2} M R ( 2 M^2 R^2 + \sigma^2 )$, then
\[
    \frac{1}{T} \sum_{k=0}^{T-1} \normeo{ \mean{\by}_k - \bh_k(\btheta^\dag) }^2 \leq \frac{C_1}{T} + \frac{C_2}{T} \sum_{k=0}^{T-1} \normeo{ \widehat{\beps}_k }.
\]
This completes the proof.
\end{proof}
    
\begin{lemma} \label{lem:CovUpdate}
Under linearization assumption \cref{eq:Auto_Lin}, it holds that
\begin{equation}    \label{eq:CovUpdate}
    P_{k+1} = \Brac{ P_k^{-1} + \sigma^{-2} H_k\matT H_k }^{-1} + Q. 
\end{equation}
\end{lemma}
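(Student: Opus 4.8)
The plan is to reduce the ETKF covariance recursion \cref{eq:ETKF_cov} to a closed form by exploiting the linearization \cref{eq:Auto_Lin}, and then to recognize the result as an instance of the Woodbury (Sherman--Morrison--Woodbury) matrix identity.

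First I would use \cref{eq:Auto_Lin} to simplify the empirical covariances in \cref{eq:cov}. Averaging \cref{eq:Auto_Lin} over the ensemble gives $\mean{\bh_k(\btheta_k)} = \bh_k(\btheta^\dagger) + H_k(\mean{\btheta}_k - \btheta^\dagger)$, so the centered observations obey $\bh_k(\btheta_k^{(i)}) - \mean{\bh_k(\btheta_k)} = H_k(\btheta_k^{(i)} - \mean{\btheta}_k)$. Substituting this into the definitions of $P_k^{\btheta,\by}$, $P_k^{\by,\btheta}$ and $P_k^{\by,\by}$ and pulling $H_k$ out of the sums yields $P_k^{\btheta,\by} = P_k H_k\matT$, $P_k^{\by,\btheta} = H_k P_k$, and $P_k^{\by,\by} = H_k P_k H_k\matT$. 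Hence \cref{eq:ETKF_cov} reads
\[
    P_{k+1} = P_k - P_k H_k\matT \Brac{ H_k P_k H_k\matT + \sigma^2 I }^{-1} H_k P_k + Q.
\]

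Next I would apply the Woodbury identity with $A = P_k^{-1}$, $U = H_k\matT$, $C = \sigma^{-2} I$, and $V = H_k$, which gives
\[
    \Brac{ P_k^{-1} + \sigma^{-2} H_k\matT H_k }^{-1} = P_k - P_k H_k\matT \Brac{ \sigma^2 I + H_k P_k H_k\matT }^{-1} H_k P_k.
\]
The right-hand side is exactly the first three terms of the previous display, so \cref{eq:CovUpdate} follows at once. The argument is essentially routine; the only points requiring a little attention are that $P_k$ should be positive definite for $P_k^{-1}$ to make sense (which holds whenever the prior ensemble covariance is non-degenerate, and can otherwise be circumvented since the reduced recursion above involves no inverse of $P_k$), and that the simplification of the empirical covariances uses only that $\bh_k$ is affine on the affine hull of the ensemble — precisely the content of \cref{eq:Auto_Lin}. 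I do not expect a genuine obstacle.
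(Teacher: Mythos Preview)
Your proposal is correct and follows essentially the same route as the paper: both reduce the empirical covariances via the linearization to obtain $P_k^{\btheta,\by}=P_kH_k\matT$ and $P_k^{\by,\by}=H_kP_kH_k\matT$, and then identify $P_k - P_kH_k\matT(H_kP_kH_k\matT+\sigma^2 I)^{-1}H_kP_k$ with $(P_k^{-1}+\sigma^{-2}H_k\matT H_k)^{-1}$. The only cosmetic difference is that the paper verifies this last identity by explicitly multiplying the two matrices and checking the product is $I$, whereas you invoke Woodbury by name.
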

    
\begin{proof}
Notice under linearization assumption \cref{eq:Auto_Lin}, 
\[
    \bh_k(\btheta_k^{(i)}) - \mean{\bh_k(\btheta_k)} = H_k ( \btheta_k^{(i)} - \mean{\btheta}_k ). 
\]
Thus by definition,
\begin{align*}
    P_k^{\btheta,\by} :=~& \frac{1}{N-1} \sum_{i=1}^N \Brac{ \btheta_k^{(i)} - \mean{\btheta}_k } \otimes \Brac{ \bh_k(\btheta_k^{(i)}) - \mean{\bh_k(\btheta_k)} } = P_k H_k\matT, \\
    P_k^{\by,\by} :=~& \frac{1}{N-1} \sum_{i=1}^N \Brac{ \bh_k(\btheta_k^{(i)}) - \mean{\bh_k(\btheta_k)} } \otimes \Brac{ \bh_k(\btheta_k^{(i)}) - \mean{\bh_k(\btheta_k)} } = H_k P_k H_k\matT.
\end{align*}
The covariance update \cref{eq:ETKF_cov} can be written as 
\begin{align*}
    P_{k+1} =~& P_k - P_k^{\btheta,\by} \Brac{ P_k^{\by,\by}+ \sigma^2 I }^{-1} P_k^{\by,\btheta} + Q \\
    =~& P_k - P_k H_k\matT \Brac{ H_k P_k H_k\matT + \sigma^2  I  }^{-1} H_k P_k + Q \\
    =~& \Brac{ P_k^{-1} + \sigma^{-2} H_k\matT H_k }^{-1} + Q,
\end{align*}
where the last step uses the fact
\begin{align*}
    & \Brac{ P_k - P_k H_k\matT \Brac{ H_k P_k H_k\matT + \sigma^2  I  }^{-1} H_k P_k } \Brac{ P_k^{-1} + \sigma^{-2} H_k\matT H_k } \\
    =~& I + \sigma^{-2} P_k H_k\matT H_k - P_k H_k\matT \Brac{ H_k P_k H_k\matT + \sigma^2  I  }^{-1} \Brac{ H_k + \sigma^{-2} H_k P_k H_k\matT H_k } \\
    =~& I + \sigma^{-2} P_k H_k\matT H_k - P_k H_k\matT \Brac{ H_k P_k H_k\matT + \sigma^2  I  }^{-1} \Brac{ \sigma^2  I  +  H_k P_k H_k\matT } \sigma^{-2} H_k \\
    =~&  I + \sigma^{-2} P_k H_k\matT H_k - \sigma^{-2} P_k H_k\matT H_k = I. 
\end{align*}
This completes the proof.
\end{proof}    

\begin{lemma} \label{lem:W_bound}
Under assumption \cref{eq:Auto_Lin} and \cref{eq:Asm_Auto_Bound}, it holds that
\[
    W_k := P_k^{-1} - (I-K_k H_k)\matT P_{k+1}^{-1} (I-K_k H_k) \succeq (2 M^2 R^2 + \sigma^2 )^{-1} H_k\matT H_k. 
\]
\end{lemma}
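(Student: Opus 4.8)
The plan is to reduce the claimed operator inequality to a bound on $\norm{H_k P_k H_k\matT}$, using the information-form covariance update from \cref{lem:CovUpdate} together with the Woodbury identity. Throughout write $\widetilde{P}_k := \Brac{ P_k^{-1} + \sigma^{-2} H_k\matT H_k }^{-1}$, so that \cref{lem:CovUpdate} reads $P_{k+1} = \widetilde{P}_k + Q$.

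\emph{Step 1 (an identity for $I - K_k H_k$).} Under \cref{eq:Auto_Lin}, the computations in the proof of \cref{lem:CovUpdate} give $K_k = P_k H_k\matT ( H_k P_k H_k\matT + \sigma^2 I )^{-1}$, and the Woodbury identity established there shows $\widetilde{P}_k = P_k - P_k H_k\matT ( H_k P_k H_k\matT + \sigma^2 I )^{-1} H_k P_k = (I - K_k H_k) P_k$. Hence $I - K_k H_k = \widetilde{P}_k P_k^{-1}$, and since $\widetilde{P}_k, P_k$ are symmetric, $(I - K_k H_k)\matT = P_k^{-1}\widetilde{P}_k$. Substituting into the definition of $W_k$ gives $W_k = P_k^{-1} - P_k^{-1}\widetilde{P}_k (\widetilde{P}_k + Q)^{-1} \widetilde{P}_k P_k^{-1}$.

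\emph{Step 2 (discard $Q$ and identify the middle factor).} Since $Q \succeq 0$, $(\widetilde{P}_k + Q)^{-1} \preceq \widetilde{P}_k^{-1}$, so $\widetilde{P}_k (\widetilde{P}_k + Q)^{-1} \widetilde{P}_k \preceq \widetilde{P}_k$ and therefore $W_k \succeq P_k^{-1} - P_k^{-1}\widetilde{P}_k P_k^{-1} = P_k^{-1}( P_k - \widetilde{P}_k ) P_k^{-1}$. Using $P_k - \widetilde{P}_k = P_k H_k\matT ( H_k P_k H_k\matT + \sigma^2 I )^{-1} H_k P_k$ from Step 1, this simplifies to $W_k \succeq H_k\matT ( H_k P_k H_k\matT + \sigma^2 I )^{-1} H_k$.

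\emph{Step 3 (bound the ensemble covariance).} It remains to show $H_k P_k H_k\matT \preceq (2 M^2 R^2) I$. Since $P_k$ is the empirical covariance of the ensemble, $P_k \preceq \frac{1}{N-1}\sum_{i=1}^N \btheta_k^{(i)} \otimes \btheta_k^{(i)}$, so $\norm{P_k} \le \frac{1}{N-1}\sum_{i=1}^N \normeo{\btheta_k^{(i)}}^2 \le \frac{N}{N-1} R^2 \le 2 R^2$ by \cref{eq:Asm_Auto_Bound} (with $N \ge 2$). Together with $\norm{H_k} \le M$ this gives $\norm{H_k P_k H_k\matT} \le 2 M^2 R^2$, hence $( H_k P_k H_k\matT + \sigma^2 I )^{-1} \succeq (2 M^2 R^2 + \sigma^2)^{-1} I$; plugging this into Step 2 yields $W_k \succeq (2 M^2 R^2 + \sigma^2)^{-1} H_k\matT H_k$. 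I do not expect a real obstacle here — the whole argument is a chain of positive-semidefinite manipulations, and the hypotheses enter essentially only through the uniform bound of Step 3. The two mild points of care are that one implicitly uses $P_k \succ 0$ (consistent with the use of Mahalanobis norms elsewhere in the proof), and that the inequality $(\widetilde{P}_k + Q)^{-1} \preceq \widetilde{P}_k^{-1}$ must be oriented correctly so that $Q$ only helps and drops out of the final constant.
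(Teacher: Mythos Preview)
Your proof is correct and follows the same overall strategy as the paper: both use the information-form covariance update from \cref{lem:CovUpdate} to reach the intermediate bound $W_k \succeq H_k\matT (H_k P_k H_k\matT + \sigma^2 I)^{-1} H_k$, and both finish by bounding $\norm{P_k}\le 2R^2$ from \cref{eq:Asm_Auto_Bound}. The only difference is cosmetic: the paper substitutes $P_{k+1}^{-1}\preceq P_k^{-1}+\sigma^{-2}H_k\matT H_k$ and then expands the resulting quadratic form term by term, whereas you first observe the identity $I-K_kH_k=\widetilde{P}_kP_k^{-1}$ and conjugate $(\widetilde{P}_k+Q)^{-1}\preceq \widetilde{P}_k^{-1}$ by $P_k^{-1}\widetilde{P}_k$, which reaches the same inequality with less algebra.
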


\begin{proof}
    By \cref{eq:CovUpdate}, it holds that
    \begin{align}
        P_{k+1} = ( P_k^{-1} +& \sigma^{-2} H_k\matT H_k )^{-1} + Q \succeq \Brac{ P_k^{-1} + \sigma^{-2} H_k\matT H_k }^{-1}. \notag \\
        &\St~ P_{k+1}^{-1} \preceq P_k^{-1} + \sigma^{-2} H_k\matT H_k. \label{eq:Pf_PinvCtrl}
    \end{align}
    Therefore,
    \begin{equation}    \label{eq:Pf_W_Decomp}
    \begin{split}
        W_k =~ &P_k^{-1} - (I - K_k H_k)\matT P_{k+1}^{-1} (I - K_k H_k) \\
        \succeq~& P_k^{-1} - (I - K_k H_k)\matT \Brac{ P_k^{-1} + \sigma^{-2} H_k\matT H_k  } (I - K_k H_k) \\
        =~& H_k\matT K_k\matT P_k^{-1} + P_k^{-1} K_k H_k - H_k\matT K_k\matT P_k^{-1} K_k H_k\matT \\
        &- \sigma^{-2} (I - K_k H_k)\matT H_k\matT H_k (I - K_k H_k). 
    \end{split}
    \end{equation}
    Denote $\tilde{P}_k = H_k P_k H_k\matT$ for simplicity. Direct computation shows
    \begin{align*}
        H_k\matT K_k\matT P_k^{-1} =~& H_k\matT \Brac{ H_k P_k H_k\matT + \sigma^2  I  }^{-1} H_k \\
        =~&  H_k\matT \Brac{ \tilde{P}_k + \sigma^2  I  }^{-1} H_k  = P_k^{-1} K_k H_k,
    \end{align*}
    \begin{align*}
        H_k\matT K_k\matT P_k^{-1} K_k H_k\matT =~& H_k\matT \Brac{ H_k P_k H_k\matT + \sigma^2  I  }^{-1} H_k P_k H_k\matT \Brac{ H_k P_k H_k\matT + \sigma^2  I  }^{-1} H_k \\
        =~& H_k\matT \Brac{ \tilde{P}_k + \sigma^2  I  }^{-1} \tilde{P}_k \Brac{ \tilde{P}_k + \sigma^2  I  }^{-1} H_k. 
    \end{align*}
    \begin{align*}
        I - H_k K_k = I - H_k\matT P_k H_k &\Brac{ H_k P_k H_k\matT + \sigma^2  I  }^{-1} = \sigma^2 \Brac{ \tilde{P}_k + \sigma^2 I }^{-1}, \\
        \St (I - K_k H_k)\matT H_k\matT H_k (I - K_k H_k) =~& H_k\matT \Brac{ I - H_k K_k }\matT \Brac{ I - H_k K_k } H_k \\
        =~& \sigma^4 H_k\matT \Brac{  \tilde{P}_k + \sigma^2 I }^{-2} H_k.
    \end{align*}
    Plug into \cref{eq:Pf_W_Decomp}, we get
    \begin{align*}
        W_k \succeq~& H_k\matT \Big[ 2 ( \tilde{P}_k + \sigma^2 I)^{-1} - \Brac{ \tilde{P}_k + \sigma^2  I  }^{-1} \tilde{P}_k \Brac{ \tilde{P}_k + \sigma^2  I  }^{-1} \\
        &\qquad - \sigma^2 \Brac{  \tilde{P}_k + \sigma^2 I }^{-2} \Big]  H_k \\
        =~& H_k\matT ( \tilde{P}_k + \sigma^2 I)^{-1} H_k. 
    \end{align*}
    Since by assumption the parameters are bounded by $R$, so that
    \begin{align*}
        \norm{P_k} =~& \frac{1}{N-1} \norm{ \sum_{i=1}^N \btheta_k^{(i)} \otimes \btheta_k^{(i)} - N \mean{\btheta}_k \mean{\btheta}_k\matT  } \leq \frac{1}{N-1} \norm{ \sum_{i=1}^N \btheta_k^{(i)} \otimes \btheta_k^{(i)} } \\
        \leq~& \frac{1}{N-1}\sum_{i=1}^N  \normo{ \btheta_k^{(i)} \otimes \btheta_k^{(i)} } = \frac{1}{N-1}\sum_{i=1}^N  \normeo{ \btheta_k^{(i)}}^2 \leq \frac{N}{N-1} R^2 \leq 2R^2 . 
    \end{align*}
    \[
        \St ~ \normo{\tilde{P}_k} = \normo{ H_k P_k H_k\matT } \leq M^2 \normo{P_k} \leq 2 M^2 R^2.
    \]
    where we use $\norm{H_k} \leq M$. This implies that 
    \[
        \tilde{P}_k + \sigma^2 I \preceq (2 M^2 R^2 + \sigma^2 ) I ~\St~ W_k \succeq (2 M^2 R^2 + \sigma^2 )^{-1} H_k\matT H_k. 
    \]
    This completes the proof.
\end{proof}

\begin{lemma} \label{lem:KP_Ctrl}
    Under assumption \cref{eq:Auto_Lin}, \cref{eq:CovUpdate} and \cref{eq:Asm_Auto_Bound}, it holds that
    \[
        \norm{ K_k\matT P_{k+1}^{-1} } \leq \sigma^{-2} M.
    \]
\end{lemma}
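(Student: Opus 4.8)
The plan is to start from the Kalman gain \cref{eq:Kal_gain} and, using the linearization \cref{asm:Auto_Lin}, reduce the statement to the operator‑norm bound on $H_k$. As computed in the proof of \cref{lem:CovUpdate}, under \cref{eq:Auto_Lin} one has $P_k^{\btheta,\by} = P_k H_k\matT$ and $P_k^{\by,\by} = H_k P_k H_k\matT$, so that
\[
    K_k = P_k H_k\matT \Brac{ H_k P_k H_k\matT + \sigma^2 I }^{-1}.
\]
The same push‑through identity verified in that proof — multiplying $K_k$ on the left by $P_k^{-1} + \sigma^{-2} H_k\matT H_k$ returns $\sigma^{-2} H_k\matT$ — shows $K_k = \sigma^{-2}\Brac{ P_k^{-1} + \sigma^{-2} H_k\matT H_k }^{-1} H_k\matT$. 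Invoking \cref{eq:CovUpdate}, the matrix $A_k := \Brac{ P_k^{-1} + \sigma^{-2} H_k\matT H_k }^{-1} = P_{k+1} - Q$ is symmetric positive definite, hence $K_k\matT = \sigma^{-2} H_k A_k$ and
\[
    K_k\matT P_{k+1}^{-1} = \sigma^{-2} H_k A_k P_{k+1}^{-1}.
\]
It therefore suffices to prove $\norm{ H_k A_k P_{k+1}^{-1} } \leq M$.

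In the autonomous regime of \cref{thm:Auto_Converge} the system noise is absent, i.e. $Q = 0$ (see \cref{alg:2}), so $A_k = P_{k+1}$, $H_k A_k P_{k+1}^{-1} = H_k$, and the bound is immediate from $\norm{H_k} \leq M$ in \cref{eq:Auto_Lin}. For a general $Q \succeq 0$ I would try to split $H_k A_k P_{k+1}^{-1} = \Brac{ H_k A_k^{1/2} }\Brac{ A_k^{1/2} P_{k+1}^{-1/2} }\Brac{ P_{k+1}^{-1/2} }$ and control the first two factors via the L\"owner order: from $A_k^{-1} = P_k^{-1} + \sigma^{-2} H_k\matT H_k \succeq \sigma^{-2} H_k\matT H_k$ one gets $H_k\matT H_k \preceq \sigma^2 A_k^{-1}$, hence $\norm{ H_k A_k^{1/2} } \leq \sigma$; and from $Q \succeq 0$ one gets $A_k \preceq P_{k+1}$, hence $A_k^{1/2} P_{k+1}^{-1} A_k^{1/2} \preceq I$ and $\norm{ A_k^{1/2} P_{k+1}^{-1/2} } \leq 1$.

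The step I expect to be the main obstacle is combining these estimates when $Q \neq 0$: the factorization only yields $\norm{ H_k A_k P_{k+1}^{-1} } \leq \norm{ H_k A_k^{1/2} }\,\norm{ A_k^{1/2} P_{k+1}^{-1/2} }\,\norm{ P_{k+1}^{-1/2} } \leq \sigma\,\norm{ P_{k+1}^{-1} }^{1/2}$, and the residual factor $\norm{ P_{k+1}^{-1} }^{1/2}$ cannot be removed this way because $A_k \preceq P_{k+1}$ does \emph{not} imply $A_k^2 \preceq P_{k+1}^2$ (squaring is not operator monotone), nor is $\lambda_{\min}(P_{k+1})$ controlled by \cref{eq:Asm_Auto_Bound} alone. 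To close the argument one seems to need either $Q = 0$ — exactly the case used in \cref{thm:Auto_Converge} — or a uniform lower bound $\lambda_{\min}(P_{k+1}) \geq \sigma^2 / M^2$ (e.g.\ from covariance inflation keeping the ensemble nondegenerate). Accordingly, I would present the clean $Q = 0$ proof, which is all \cref{thm:Auto_Converge} needs, and record the extra nondegeneracy hypothesis required for the nonzero‑$Q$ extension.
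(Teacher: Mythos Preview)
Your route differs from the paper's. The paper does not split into $Q=0$ versus $Q\neq 0$; it bounds $\norm{K_k\matT P_{k+1}^{-1}}^2=\norm{K_k\matT P_{k+1}^{-2}K_k}$ directly. Writing $K_k=P_kH_k\matT(\sigma^2I+H_kP_kH_k\matT)^{-1}$ and $B:=P_k^{-1}+\sigma^{-2}H_k\matT H_k$, it invokes $P_{k+1}^{-1}\preceq B$ (this is \cref{eq:Pf_PinvCtrl}) to replace $P_{k+1}^{-2}$ by $B^2$ inside the sandwich $H_kP_k(\,\cdot\,)P_kH_k\matT$; after that substitution the expression collapses algebraically to $\sigma^{-4}H_kH_k\matT\preceq\sigma^{-4}M^2I$, giving the bound.

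Your $Q=0$ argument via the push-through identity $K_k=\sigma^{-2}A_kH_k\matT$ with $A_k=P_{k+1}$ is correct and markedly more transparent than the paper's computation for that case. The obstacle you isolate for $Q\neq 0$ --- that $A\preceq B$ does not imply $A^2\preceq B^2$ --- is precisely the step the paper leans on: passing from $P_{k+1}^{-1}\preceq B$ to $H_kP_kP_{k+1}^{-2}P_kH_k\matT\preceq H_kP_kB^2P_kH_k\matT$ tacitly uses $P_{k+1}^{-2}\preceq B^2$. So the difficulty you flag is not actually circumvented by the paper's approach; both arguments share the same delicate point when $Q\neq 0$. Your suggested remedy (a lower bound on $\lambda_{\min}(P_{k+1})$, which $P_{k+1}\succeq Q$ supplies whenever $Q\succ 0$) is a reasonable way to close it, and indeed the paper itself appeals to $P_{k+1}\succeq Q$ to control $\norm{P_{k+1}^{-1}}$ in the non-autonomous proof.
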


\begin{proof}
    Notice $P_{k+1}^{-1} \preceq P_k^{-1} + \sigma^{-2} H_k\matT H_k$ (see \cref{eq:Pf_PinvCtrl}), so that
    \begin{align*}
        H_k P_k P_{k+1}^{-2} P_k H_k\matT \preceq ~& H_k P_k \Brac{ P_k^{-1} + \sigma^{-2} H_k\matT H_k}^2 P_k H_k\matT \\
        =~& \sigma^{-2} \Brac{ \sigma^2 I + H_k P_k H_k\matT } H_k \cdot \sigma^{-2} H_k\matT \Brac{ \sigma^2 I + H_k P_k H_k\matT }. 
    \end{align*}
    By definition, $K_k = P_k H_k\matT \Brac{ \sigma^2 I + H_k P_k H_k\matT }^{-1} $, so that
    \begin{align*}
        K_k\matT P_{k+1}^{-2} K_k &= \Brac{ \sigma^2 I + H_k P_k H_k\matT }^{-1} H_k P_k P_{k+1}^{-2} P_k H_k\matT \Brac{ \sigma^2 I + H_k P_k H_k\matT }^{-1} \\
        \preceq ~& \Brac{ \sigma^2 I + H_k P_k H_k\matT }^{-1} \cdot \sigma^{-4} \Brac{ \sigma^2 I + H_k P_k H_k\matT } H_k H_k\matT \Brac{ \sigma^2 I + H_k P_k H_k\matT } \\
        & \qquad \cdot \Brac{ \sigma^2 I + H_k P_k H_k\matT }^{-1} \\
        = ~& \sigma^{-4} H_k H_k\matT \preceq \sigma^{-4} M^2 I.
    \end{align*}
    So that $\norm{K_k\matT P_{k+1}^{-1}} = \norm{K_k\matT P_{k+1}^{-2} K_k}^{1/2} \leq \sigma^{-2} M$.
\end{proof}

\subsection{Proof for non-autonomous system}   \label{app:Pf_NonAuto_Converge}
The proof mostly repeats the previous section, and we only highlight the new steps to deal with the non-autonomous features. 

First notice that covariance update is unchanged, so that the estimations in \cref{lem:CovUpdate}, \cref{lem:W_bound} and \cref{lem:KP_Ctrl} are still valid for non-autonomous case.

\begin{proof}[Proof for \cref{thm:NonAuto_Converge}]
Under linearization \cref{eq:NonAuto_Lin}, it holds 
\[
    \mean{\btheta}_{k+1} = \mean{\btheta}_k - K_k H_k ( \mean{\btheta}_k - \btheta_k^\dagger ) + K_k \widehat{\beps}_k. 
\]
Notice $\btheta_{k+1}^\dag = \btheta_k^\dag + \delta \btheta_k$, we can rewrite the above equation as
\[
    \mean{\btheta}_{k+1} - \btheta_{k+1}^\dagger = ( I - K_k H_k )( \mean{\btheta}_k - \btheta_k^\dagger ) + K_k \widehat{\beps}_k - \delta \btheta_k. 
\]
Following the same strategy to derive \cref{eq:Pf_Decay}, one can show that 
\begin{align*}
    \normeo{ \mean{\btheta}_{k+1} - \btheta_{k+1}^\dagger }_{P_{k+1}}^2 \leq~& \normeo{ \mean{\btheta}_k - \btheta_k^\dagger }_{P_k}^2 - (\mean{\btheta}_k - \btheta_k^\dagger)\matT W_k ( \mean{\btheta}_k - \btheta_k^\dagger ) \\
    &+ 2 \Brac{ K_k \widehat{\beps}_k - \delta \btheta_k }\matT P_{k+1}^{-1} ( \mean{\btheta}_{k+1} - \btheta_k^\dagger ). 
\end{align*}
For the last term, notice 
\begin{align*}
    2 \Brac{ K_k \widehat{\beps}_k - \delta \btheta_k }\matT &P_{k+1}^{-1} ( \mean{\btheta}_{k+1} - \btheta_k^\dagger ) \leq 2 \Brac{ \normeo{ \widehat{\beps}_k } \normo{ K_k\matT P_{k+1}^{-1} } + \normeo{ \delta \btheta_k } \normo{P_{k+1}^{-1}} } \normeo{ \mean{\btheta}_{k+1} - \btheta_k^\dagger } \\
    \leq~& 4 R \sigma^{-2} M \normeo{ \widehat{\beps}_k } + 4 R \lambda_{\min}^{-1}(Q) \normeo{ \delta \btheta_k } .
\end{align*}
where we use $P_{k+1} \succeq Q \succeq \lambda_{\min}^{-1}(Q) I$ (see \cref{eq:CovUpdate}). 

Notice the rest procedures exactly repeats the previous steps, and we can get 
\[
    \frac{1}{T} \sum_{k=0}^{T-1} \normeo{ \mean{\by}_k - \bh_k(\btheta_k^\dagger)}^2 \leq \frac{C_1}{T} + \frac{C_2}{T} \sum_{k=0}^{T-1} \normeo{ \widehat{\beps}_k } + \frac{C_3}{T} \sum_{k=0}^{T-1} \normeo{ \delta \btheta_k },
\]
where $C_1,C_2$ are the same as before and $C_3 = 4 R \lambda_{\min}^{-1}(Q) ( 2 M^2 R^2 + \sigma^2 ) $.
\end{proof}

\end{document}